\newtheorem{thm}{Theorem}
\newtheorem{lem}[thm]{Lemma}
\newtheorem*{remark}{Remark}
\begin{document}

\author{Patryk Mach}\email{patryk.mach@uj.edu.pl}
\affiliation{Institute of Theoretical Physics, Jagiellonian University, {\L}ojasiewicza 11, 30-348 Krak\'{o}w, Poland}
\author{Edward Malec}\email{edward.malec@uj.edu.pl}
\affiliation{Institute of Theoretical Physics, Jagiellonian University, {\L}ojasiewicza 11, 30-348 Krak\'{o}w, Poland}
 
\title{Steady critical accretion onto black holes: selfgravity and sonic point characteristics}

\begin{abstract}
The spherically symmetric steady accretion  of polytropic perfect fluids onto a black hole is the simplest flow model that can demonstrate effects of backreaction (selfgravity). It has been discovered 16 years ago that backreaction does not influence some (``intensive'') characteristics of sonic points, under suitable conditions. Herein we consider a wider class of equations of state, with polytropic indices in the range $(1,2]$, and establish detailed boundary conditions that allow one to prove this fact. We find also  numerical examples showing limits of our analytic criteria---if suitable analytic conditions are not satisfied, then selfgravity influences all characteristics of sonic points. That fact constrains the applicability of the recent proposal of Baumgarte and Shapiro to estimate the lifetime of black holes within compact stellar objects. 
\end{abstract}

\maketitle

\section{Introduction}

The spherical steady accretion of a massless gas onto a gravitational center has been investigated in the newtonian context by Bondi \cite{bondi} and in the Schwarzschild space-time by  Michel \cite{michel}, Shapiro and Teukolsky \cite{shapiro_teukolsky}. The general-relativistic description including backreaction has been formulated in \cite{malec} and investigated in \cite{PRD2006, KM2006, Mach2007,Mach2008, Mach2009, KMRS,das, Dokuchaev2012}.

Steady accreting flows constitute a simple but nontrivial  accretion model; this allows one to see effects of backreaction---or selfgravity of fluids---we use words selfgravity or backreaction interchangeably. They have been discovered and partly analysed in \cite{PRD2006,KM2006,KMRS,Dokuchaev2012}. One of interesting features is that there are some characteristics of accretion critical flows that do not depend on  backreaction, assuming certain boundary conditions. While the areal radius of the sonic sphere $R_\ast$ and the mass $m_\ast$ within the sonic sphere might depend on selfgravity, their ratio $m_\ast/R_\ast$, the sonic speed of sound $a_\ast$, and the infall velocity $U_\ast$ might be the same as in the critical accretion of massless fluids in a Schwarzschild spacetime. We shall say that these characteristcs---$m_\ast/R_\ast $, $a_\ast$ and $U_\ast$ are intensive. There has been in recent years a continuing interest in the general relativistic accretion, with different aspects investigated by several researchers \cite{Miller, Ortiz, Kremer}.

There are two goals of this paper. First, we extend our former results into a wider class of polytropic equations of state, including those formulated in terms of the baryonic mass density. We find boundary conditions that allow us to provide analytic proofs to the aforementioned observation. It appears that intensive sonic point parameters do not depend on selfgravity and that the profile of the mass accretion rate depends on the mass fraction of  gas in an accreting system. This result is shown to be true for a wide class of polytropes with a polytropic index $\Gamma \in (1,2]$. Second, we demonstrate by finding numerical examples, that the boundary conditions established beneath play a significant role---if they are not satisfied, then all characteristics of sonic points of critical flows depend on selfgravity.  

The order of this paper is as follows. In the next two sections we present equations describing selfgravitating accreting flows, assuming approximately steady accretion. We would like to stress that there emerge two measures of the mass accretion rate. Only one of them---a quantity denoted later on as $\dot m$---is directly related to the asymptotic mass of the whole configuration. The other, $\dot m_\mathrm{B}$, refers to the infall of the baryonic mass. Section IV is dedicated to the description of critical flows, which are characterized by the existence of a critical point---a sonic sphere (sonic point). 

It appears, from the formulae describing parameters of critical points, that characteristics of  sonic points might potentially depend on selfgravity of infalling  fluids; thus they would potentially differ from those describing test (massless) fluids that satisfy the same boundary conditions. Sections V and VI are devoted to the elucidation of this question. We show that under suitable boundary conditions involving the total mass $m$, the external size of the system $R_\infty$ and the asymptotic speed of sound $a_\infty $, intensive characteristics of sonic points are in fact the same as for the critical accretion of test fluids. In Section V we consider a polytropic equation of state $p = K \varrho^\Gamma$ formulated in terms of the rest mass density $\varrho$. This $p$--$\varrho$ equation of state is less popular in the astrophysical literature, but we shall see that the rest mass $\varrho$ directly enters the formula that describes the growth of the mass of a black hole due to the accretion. There are two main results; one of them (Theorem 4) is an explicit realisation of a scenario described in \cite{PRD2006} and it is valid only for $\Gamma \in [1,5/3]$. The other, Theorem 3, demands a stronger boundary condition but it is less restrictive in the choice of positions of sonic points and works for $\Gamma \in (1,2]$. Section VI deals with polytropic equations of state $p = K n^\Gamma$, $\Gamma \in (1,2]$, expressed in terms of the baryonic mass density $n$. Main results are formulated in Theorem 7.

Section VII presents results of numerical investigation. They confirm the validity of theoretical results of Sections V and VI. Not surprisingly, we have found that if the boundary conditions of Theorems 3 and 7 are not satisfied, then all parameters of sonic points depend on selfgravity. This fact has a consequence that is discussed in section VIII; the recently announced analysis of lifetimes of primordial black holes within neutron stars \cite{Baumgarte_Shapiro, BS} hypothesizes that   ``intensive parameters'' of the related accretion are not sensitive to the backreaction. We have found  numerical counterexamples to this assumption. 

Finally, the last Section brings a short summary.

\section{Initial data, quasilocal energy and mass accretion rate}

We will consider a spherically symmetric compact ball of a fluid steadily falling onto a non-rotating black hole. This system has been investigated in detail in \cite{malec,PRD2006}; in this section we briefly recapitulate the main points. 

We will use the line element
\begin{equation}
ds^2 = -N^2 dt^2 + \alpha dr^2 + R^2 \left( d \vartheta^2 + \sin^2 \vartheta d\varphi^2 \right),
\label{ds}
\end{equation}
where the lapse $N$, $\alpha$ and the areal radius $R$ are functions of a coordinate radius $r$ and an asymptotic time variable $t$. The nonzero components of the extrinsic curvature $K_{ij}$ of the $t = \mathrm{const}$ slices read \begin{subequations}
\begin{eqnarray}
K_r^r & = & \frac{1}{2N\alpha} \partial_t \alpha, \\
K_\varphi^\varphi & = & K_\vartheta^\vartheta = \frac{\partial_t R}{NR} = \frac{1}{2} ( \mathrm{tr} K - K_r^r).
\end{eqnarray}
\end{subequations}
Here
\begin{equation}
\mathrm{tr} K = \frac{1}{N} \partial_t \ln \left( \sqrt{\alpha} R^2 \right)
\end{equation}
denotes the trace of the extrinsic curvature $K_{ij}$. The mean curvature of a two-sphere of constant radius $r$, embedded in a Cauchy hypersurface, can be expressed as
\begin{equation}
k = \frac{\partial_r R}{\sqrt{\alpha}}.
\end{equation}
We assume the energy-momentum tensor of the  perfect fluid
\begin{equation}
T^{\mu\nu} = (p + \varrho) u^\mu u^\nu + p g^{\mu\nu},
\end{equation}
where $u^\mu$ denotes the four-velocity of the fluid, $p$ is the pressure and $\varrho$ the energy density. The four-velocity of the fluid is normalized, $u_\mu u^\mu = -1$. Since we only consider radial flows, we have $u_\varphi =u_\vartheta =0$.  We shall also define the energy flux density through a coordinate sphere
\begin{equation}
j \equiv \frac{N}{\sqrt{\alpha }} T^0_r.   
\end{equation}
 
Let us define two optical scalars
\begin{subequations}
\begin{eqnarray}
\theta & \equiv & \frac{2k}{R} + \mathrm{tr} K - K_r^r, \\
\theta^\prime & \equiv & \frac{2k}{R}-(\mathrm{tr} K - K_r^r).
\end{eqnarray}
\end{subequations}
Let $m$ be the asymptotic mass and $S=4\pi R^2$ an area of a two sphere labelled by $R$. The following equality holds true (see \cite{emnom} and Eq.\ (46) in \cite{emnompen}):
\begin{equation}
m = \left(\frac{S}{16\pi}\right)^{1/2}-\frac{R^3}{8} \theta(S) \theta^\prime(S) + m_\mathrm{ext}(r,t),
\label{penrose1}
\end{equation}
where  
\begin{equation}
m_\mathrm{ext}(r,t) = \pi \int_r^\infty \sqrt{\alpha}dr^\prime R^3\left[ \rho(\theta +\theta ')
+j (\theta -\theta ')\right]. 
\label{penrose} 
\end{equation}
Equation (\ref{penrose1}) is valid on any asymptotically flat Cauchy hypersurface, assuming a quick enough falloff of matter fields at spatial infinity.  The quantity $m_\mathrm{ext}(r,t)$ can be interpreted as the quasilocal  contribution to the asymptotic mass $m$ coming from the mass distribution outside the coordinate sphere $r$. The quasilocal mass $m_\mathrm{ext}(r,t)$ changes in time according to the following formula  
\begin{equation}
\partial_t m_\mathrm{ext} = 
4 \pi R^2 k j + 2\pi N R^3 \left(\mathrm{tr} K  - K_r^r\right) p.
\label{tempo}
\end{equation}
We introduce an auxiliary quantity
\begin{eqnarray}
\label{i9}
\tau & = & \frac{3}{4R}\int_R^\infty  {R^\prime}^2(K_r^r)^2 d R^\prime - \\
&& \frac{1}{4R}
\int_R^\infty {R^\prime}^2(\mathrm{tr} K)^2 d R^\prime - \frac{1}{2R} \int_R^\infty \mathrm{tr} K
K_r^r {R^\prime}^2 d R^\prime. \nonumber
\end{eqnarray}
Imposing a integral slicing condition
\begin{equation}
\label{f1}
\tau = \left[ \frac{R (\mathrm{tr} K - K_r^r)}{2}\right]^2,
\end{equation}
one gets, from the momentum constraint equation, $T^0_r = 0$ \cite{malec}, that is $u_r=0$ and $j=0$. Thus all spatial components of the 4-velocity do vanish. The line element (\ref{ds}) with the slicing condition (\ref{f1}) corresponds simply to comoving coordinates. 
 
The mass $m_\mathrm{ext}(r,t)$ becomes now 
\begin{eqnarray}
m_\mathrm{ext}(r,t) & = &
\pi \int_r^{\infty }\sqrt{\alpha}dr^\prime R^3  \rho (\theta +\theta^\prime)\nonumber\\
& = &4\pi \int_r^{\infty }\sqrt{\alpha}dr^\prime R^2  \rho k. 
\label{penrose2} 
\end{eqnarray}
The areal velocity $U$ of a comoving particle designated by coordinates $(r,t)$ is given by
\begin{equation}
U(r,t) = \frac{\partial_t R}{N} = \frac{R}{2}(\mathrm{tr} K - K_r^r).
\end{equation}
From the Einstein constraint equations one has \cite{malec}
\begin{equation}
k =  \sqrt{1 - \frac{2m(R)}{R} + U^2},
\label{p}
\end{equation}
where $m(R)=m-m_\mathrm{ext}(R,t)$ is the mass within the areal sphere of a radius $R$, and
\begin{equation}
\partial_R (R^2 U) - R^2 \mathrm{tr} K = 0.
\label{pp}
\end{equation}

We use here and in what follows the relation $\partial_r = \sqrt{\alpha}k\partial_R$ in order to eliminate the differentiation with respect the comoving radius $r$. The quasilocal mass $m(R)$ obeys the equation
\begin{equation}
\label{masseq}
\partial_R m(R) = 4 \pi R^2 \varrho.   
\end{equation}

We shall define the mass accretion rate---the mass transfer in a unit of time through a sphere of a fixed areal radius $R$---by the formula 
\begin{equation}
 \dot m = -(\partial_t - (\partial_t R)\partial_R) m_\mathrm{ext}(R) =
- 4 \pi NR^2U (\varrho + p).
\label{dot1m}
\end{equation}
It should be emphasized that the mass accretion rate $\dot m$ is related to the asymptotic mass $m$. In fact, we have
\begin{equation}
m=m(R)+m_\mathrm{ext}(R(r),t);
\end{equation}
$m$ is conserved, and thus $(\partial_t - (\partial_t R)m(R) =\dot m$. The mass in the interior of a sphere  of an areal radius $R$ increases by the amount $\dot m$ lost by the exterior of this sphere. In particular, if one considers the accretion onto a black hole, then it is $\dot m$ that adds to its mass content, in a unit of time. 

\section{Steady flow approximation}
 
A collapsing fluid is called steady, if all its characteristics are constant at a fixed areal radius $R$---we follow the definition of Courant and Friedrichs \cite{Courant}. In analytical terms
\begin{equation}
    \partial_t X|_{R = \mathrm{const}} = (\partial_t - (\partial_t R) \partial_R) X = 0,
    \label{s}
\end{equation}
where $X = \varrho$, $U$, $p$ \cite{Misner}. Accretion leads to the change of geometric quantities, like the mean curvature $k$ or the area of the black hole horizon. One can see from formula (\ref{dot1m}), that these changes are negligible for time intervals $\Delta t$ such that $\dot m \Delta t \ll m$,  and on these time scales the notion of steady fluid collapse is justified. 
 
For steady flows evolution equations become just ordinary differential equations.  The Einstein evolution equation
\begin{equation}
 \partial_t U = k^2   \partial_RN  - \frac{m(R)N}{R^2} - 4 \pi N R   
\end{equation}
becomes, taking into account condition (\ref{s}),
\begin{equation}
 (\partial_tR) U = k^2   \partial_RN  - \frac{m(R)N}{R^2} - 4 \pi N R.   
\end{equation} 
This in turn can be written, applying initial constraints (\ref{p}) and (\ref{pp}), as
\begin{equation}
    \partial_R \ln \left( \frac{N}{k} \right) = \frac{4 \pi R (\varrho + p)}{k^2}
\label{N/k}
\end{equation}
(see \cite{malec} for details). In a similar vein, the energy conservation equation
\begin{equation}
\partial_t \varrho = -N \mathrm{tr} K (\varrho +   p) 
\end{equation}
becomes an ordinary differential equation with respect $R$, assuming steady flow. One can show that $\partial_R \dot m = 0$ \cite{malec}.

We will assume that the radius $R_\infty$ of the ball of fluid and boundary data are such that $|U_{\infty}| \ll \frac{m(R_\infty)}{R_\infty} \ll a_\infty$. These boundary conditions are needed in order to glue the steady fluid with an external Schwarzschild geometry (see a discussion in \cite{malec}).
Strictly saying, we demand that our
Cauchy slices end at spatial infinity. Later on we shall present more precise boundary conditions, that guarantee  the existence of intensive parameters of critical flows.

Let $N_\infty$ and $k_\infty$ denote the values of the lapse $N$ and the mean curvature $k$ at the outer boundary $R = R_\infty$. Equation (\ref{N/k}) can be integrated inwards, yielding
\begin{equation}
N = \frac{N_\infty}{k_\infty} \sqrt{1-\frac{2m(R)}{R}+U^2}e^{-4\pi\int_R^{R_\infty} (  p + \varrho) \frac{s}{k^2} ds}.
\label{25}
\end{equation}

We define a black hole as a region within an apparent horizon to the future, i.e., a region enclosed by an outermost sphere $S_A$ on which the optical scalar $\theta \equiv  (2k  + 2U)/R$ vanishes \cite{Wald,nom}. The other optical scalar, $\theta^\prime \equiv  (2k  - 2U)/R,$ is strictly positive for all spheres outside $S_A$, because the velocity is directed inward, $U<0$. The ratio $2m(R_\mathrm{BH})/R_\mathrm{BH}$ becomes 1 on $S_A$, where $R_\mathrm{BH}$ is the areal radius of the apparent horizon.

The momentum conservation equation $\nabla_\mu T^\mu_r = 0$ reads 
 \begin{equation}
\label{euler}
N \partial_R   p + (  p + \varrho) \partial_R N = 0.
\end{equation}
For barotropic fluids with an equation of state of the form $p = p(\varrho)$, Eq.\ (\ref{euler}) can be integrated, yielding
\begin{equation}
\label{bernoulligeneral}
N = C \frac{n}{p + \varrho},
\end{equation}
where $C$ is an integration constant, and
\begin{equation}
\label{baryonicn}
n = n_1 \exp \int_{\varrho_1}^\varrho \frac{ds}{s + p(s)}
\end{equation}
denotes the baryonic (rest-mass) density ($n_1$ is the baryonic density corresponding to a reference energy density $\rho_1$). For perfect fluids with a barotropic equation of state expression (\ref{baryonicn}) works as an integration factor; if $T_{\mu\nu}$ is conserved, then the baryonic current is also conserved, i.e., $\nabla_\mu \left( n u^\mu \right) = 0$. For our model this implies that
\begin{equation}
\partial_R\left( nUR^2\right)=0.
\end{equation}
One can define the baryonic mass accretion rate
\begin{equation}
\dot m_\mathrm{B}\equiv -4\pi nUR^2
\end{equation}
Clearly, in the present context the two mass accretion rates, $\dot m$ and $\dot m_\mathrm{B}$, are proportional, but we do not see any \textit{a priori} argument that would allow us to determine the proportionality constant while remaining within the present model. Additional information would be needed.

If the spacetime is stationary, both accretion rates can be defined on an equal footing, by referring to Killing vectors. In our model, choosing the areal radius as the radial coordinate, we can write the spacetime line element as
\begin{eqnarray}
ds^2 & = & - \left( N^2 - \frac{N^2 U^2}{k^2} \right) dt^2 - \frac{2 N U}{k^2} dt dR + \frac{1}{k^2} dR^2 \nonumber \\
& & + R^2 (d \vartheta^2 + \sin^2 \vartheta d \varphi^2).
\label{4.19}
\end{eqnarray}
For steady flows, all metric functions in Eq.\ (\ref{4.19}) depend only on $R$. Such metrics admit a Killing vector of the form $\xi^\mu = (1,0,0,0)$---they are stationary in the region in which $\xi_\mu \xi^\mu < 0$. As a consequence, the vector $j_\varepsilon^\mu = - T^\mu_\nu \xi^\nu = - T^\mu_t$ satisfies the conservation law $\nabla_\mu j_\varepsilon^\mu = 0$---this follows immediately from the conservation law for the energy-momentum tensor and the Killing equation $\nabla_\mu \xi_\nu + \nabla_\nu \xi_\mu = 0$. It turns out that $\dot m = - 4 \pi R^2 (\rho + p) N U$ can also be expressed as
\begin{equation}
\dot m = \int_0^{2\pi} d \varphi \int_0^\pi d \theta R^2 \sin \theta \, T^R_t,
\end{equation}
i.e., as a (minus) flux of the vector $j_\varepsilon^\mu$ through a sphere of constant radius. Indeed, a straightforward calculation yields for the energy-momentum component $T^R_t = - (\rho+p) N U$. Thus, for stationary metrics, both accretion rates $\dot m$ and $\dot m_\mathrm{B}$ are related to a conserved current. On the other hand, we would like to stress again that it is the mass flux $\dot m$ that is connected to the asymptotic mass, as explained at the end of the previous section. Note that the ratio
\begin{equation}
\frac{\dot m}{\dot m_\mathrm{B}} = \frac{\rho + p}{n} N
\end{equation}
is constant, as required by the Bernoulli equation.

We shall specialize in   Sections IV and V  to polytropic perfect fluids $p = K \varrho^\Gamma$, with $K$ and $\Gamma$ being constant ($1 \le \Gamma \le 2$). Thence Eq. (\ref{euler}) can be integrated, yielding 
\begin{equation}
a^2 = - \Gamma + (\Gamma +a^2_\infty) \left( \frac{N_\infty}{N} \right)^\kappa.
\label{bernoulli}
\end{equation}
Here $\kappa = (\Gamma - 1)/\Gamma$ and $a=\sqrt{d p/d \varrho}$ is the speed of sound. Equation (\ref{bernoulli}) can be regarded as the general-relativistic version of the Bernoulli equation.

In Section VI we discuss polytropes $p=Kn^\Gamma$, with $1 \le \Gamma \le 2$, which are more common in astrophysical literature. In this case we have 
\begin{equation}
N\left( \Gamma - 1 - a^2_\infty \right)=N_\infty \left( \Gamma - 1 - a^2 \right),
\label{nnB}
\end{equation}
where $N_\infty =N(R_\infty)$. In the following, we will refer to both polytropic equations of state as the $p$--$\varrho$ and $p$--$n$ case, respectively.

Stationary accretion of test fluids, i.e., on a Schwarzschild background, has been studied extensively for both types of polytropic equations of state. Accretion with the equation of state $p = K\varrho^\Gamma$ has been investigated in \cite{malec}. The model with the equation of state $p = K n^\Gamma$ has been studied in \cite{michel, begelman, CS, CMS}. For the test-fluid accretion with the equation of state $p = Kn^\Gamma$ Chaverra et al.\ \cite{CMS} have shown that the ratio $L \equiv \dot m/\dot m_\mathrm{B}$ plays an important role regarding the existence and the character of solutions. For $\Gamma > 5/3$, if $L$ is larger than 1, only global critical solutions can exist. But if $L$ is smaller than 1, but larger than a particular threshold number, then there appears a new, homoclinic, family of solutions.

\section{Critical flows and mass accretion rate}

We shall focus our attention on the so-called critical accreting flows.  There exists a point---the sonic point---at which  
the length of the spatial velocity vector $|{\vec U}| = |U|/k)$ equals $a$. Let us point that this sonic point definition coincides, for test fluids, with the one used by Shapiro and Teukolsky $U^2 = a^2/(1 + 3 a^2)$ \cite{shapiro_teukolsky}. In the Newtonian limit this coincides with the well known requirement $|U|=a$. In the external layers of gas $a>|U|/k$, while in the interior of the sonic sphere $a<|U|/k$. 
  
Critical accreting flows are of interest, because there exist arguments (albeit at present they concern only newtonian hydrodynamical accretion), that they maximize the mass acretion rate  \cite{KMRS, Pad}. 
  
In the following we will denote with an asterisk all values referring to the sonic points, e.g., $a_\ast$, $U_\ast $, etc. The four characteristics, $a_\ast$, $U_\ast$, $m_\ast / R_\ast$, and $c_\ast$ are related \cite{malec},
\begin{equation}
a_\ast^2 \left( 1 - \frac{3m_\ast}{2R_\ast} + c_\ast \right) = U_\ast^2 = \frac{m_\ast}{2R_\ast} + c_\ast,
\end{equation}
where $c_\ast = 2 \pi R^2_\ast   p_\ast$. 

The following lemma bounding the value of $a_\ast^2$ is valid for general barotropic equations of state and in particular for both polytropes discussed in this paper: $p = K \varrho^\Gamma$ and $p = K n^\Gamma$. For the polytropic equation of state $p = K n^\Gamma$ we have a stronger, general bound: $a^2 \le \Gamma - 1$.

\begin{lem}
If $R_\ast > 2m_\ast$, that is a sonic point exists outside of an apparent horizon, then $a^2_\ast \le 1$.
\end{lem}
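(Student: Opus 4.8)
The plan is to extract $a_\ast^2$ explicitly from the sonic-point relation displayed just above the statement and then reduce the claimed inequality to the hypothesis $R_\ast > 2m_\ast$. The relation
\[
a_\ast^2 \left( 1 - \frac{3m_\ast}{2R_\ast} + c_\ast \right) = \frac{m_\ast}{2R_\ast} + c_\ast
\]
is a single linear equation in $a_\ast^2$, so I would first solve it to obtain
\[
a_\ast^2 = \frac{\dfrac{m_\ast}{2R_\ast} + c_\ast}{1 - \dfrac{3m_\ast}{2R_\ast} + c_\ast},
\]
where $c_\ast = 2\pi R_\ast^2 p_\ast \ge 0$ because the pressure is nonnegative.

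The key preliminary step is to control the sign of the denominator. Writing $x \equiv m_\ast/(2R_\ast)$, the hypothesis $R_\ast > 2m_\ast$ means $x < 1/4$, whence $1 - 3x > 1/4 > 0$; adding $c_\ast \ge 0$ leaves the denominator strictly positive. Since the numerator $x + c_\ast$ is also positive, this confirms \emph{en passant} that $a_\ast^2 > 0$, as it must be.

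With a positive denominator I can clear it without reversing the inequality. Thus $a_\ast^2 \le 1$ is equivalent to $\frac{m_\ast}{2R_\ast} + c_\ast \le 1 - \frac{3m_\ast}{2R_\ast} + c_\ast$; the $c_\ast$ terms cancel and this collapses to $\frac{2m_\ast}{R_\ast} \le 1$, i.e.\ $R_\ast \ge 2m_\ast$, which is guaranteed (indeed with strict inequality) by the hypothesis.

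I expect the only genuine subtlety to be the sign of the denominator: without first establishing that $1 - 3m_\ast/(2R_\ast) + c_\ast > 0$, multiplying through by it could flip the inequality and invalidate the argument. The hypothesis $R_\ast > 2m_\ast$ is precisely what secures this sign, so the logical weight of the lemma rests on that single positivity check rather than on any delicate estimate involving the equation of state.
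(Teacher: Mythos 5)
Your proof is correct, and it is essentially the paper's argument run in the contrapositive direction. The paper takes the same sonic-point relation but solves it for the compactness, $m_\ast/(2R_\ast) = \bigl(a_\ast^2 + a_\ast^2 c_\ast - c_\ast\bigr)/\bigl(1+3a_\ast^2\bigr)$, then assumes $a_\ast^2 > 1$ and observes that this forces the right-hand side above $1/4$, i.e.\ $R_\ast < 2m_\ast$, contradicting the hypothesis. You instead solve for $a_\ast^2$ and bound it directly. Both routes use exactly the same two ingredients ($c_\ast \ge 0$ and $R_\ast > 2m_\ast$), but they differ in which division must be justified: the paper divides by $1+3a_\ast^2$, which is positive unconditionally, so no sign discussion is needed, whereas your route divides by $1 - 3m_\ast/(2R_\ast) + c_\ast$, whose positivity you correctly establish from the hypothesis before clearing it --- you rightly flag this as the crux of your version. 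What your arrangement buys in return is slightly more information: an explicit formula for $a_\ast^2$, the observation $a_\ast^2 > 0$, and in fact the strict bound $a_\ast^2 < 1$ (since $2m_\ast/R_\ast < 1$ strictly), all of which the contradiction argument leaves implicit.
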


\begin{proof}
One gets from the definition of the sonic point
\begin{equation}
\frac{m_\ast}{2R_\ast}=\frac{-c_\ast+a^2_\ast +a^2_\ast c_\ast }{1+3a^2_\ast}.
\label{lemma1}
\end{equation}
Assume the opposite---that $a^2_\ast >1$. Then one obtains that the right hand side of (\ref{lemma1})  is bigger than $1/4$, which would imply $R_\ast<2m_\ast$, in contradiction to the assumption.
\end{proof}

The mass accretion rate (\ref{dot1m}) can be expressed by characteristics of
the sonic point \cite{malec}:
\begin{equation}
\dot m = - 4\pi R^2_\ast  \left( 1 - \frac{3m_\ast}{2 R_\ast} + c_\ast \right)^{1/2} U_\ast \varrho_\ast  \left(1 + \frac{p_\ast}{\varrho_\ast} \right) \frac{N_\ast}{k_\ast}.
\label{4.18}
\end{equation}
It was found numerically in \cite{PRD2006} that for $\Gamma <5/3$ and under suitable and quite subtle boundary conditions, the constant $c_\ast $ is negligible. A scenario for the analytic proof, (again for $\Gamma <5/3$), has been also outlined   \cite{PRD2006}.
We shall prove in the next two Sections various generalizations  of this statement; most of them are valid  for $\Gamma \in (1,2]$.  Assuming that $c_\ast \approx 0$, we would have 
\begin{equation}
  U_\ast^2 =
   \frac{m_\ast}{2R_\ast}  = a_\ast^2 \left( 1 - \frac{3m_\ast}{2R_\ast}   \right).
   \label{crit} 
   \end{equation}
These formulas are similar to those valid for test fluids---the only difference is that the asymptotic mass $m$ is being replaced by the quasilocal mass $m_\ast $ within the sonic sphere. In Section \ref{pnproofs} we shall prove an analogous result for the equation of state $p = K n^\Gamma$. It appears, that---similarly to the test fluid approximation---the parameters $U_\ast$, $a_\ast$ and $m_\ast/R_\ast$ of the sonic point do not depend on the central mass.

We shall demonstrate that under suitable boundary conditions the lapse $N$ is well approximated by the mean curvature $k$, in the equation for the sonic point. Consequently, an estimate on $k$ will play and important role in our reasoning.

We will always demand that at the boundary of the cloud of accreting gas $U^2_\infty \ll m/R_\infty \ll a^2_\infty $; these are expected characteristics of the external branch of a critical accreting solution. If $m/R_\infty$ is sufficiently small, then
\begin{equation}
    k_\infty = \sqrt{1 - \frac{2m}{R_\infty} + U_\infty^2} \approx 1,
\end{equation}
a condition, which we will also assume in the analytic part of this paper. In this case, Eq.\ (\ref{25}) reads
\begin{equation}
N = k e^{-4\pi\int_R^{R_\infty} (  p + \varrho) \frac{sds}{k^2 }},
\label{lapse}
\end{equation}
where we have additionally set $N_\infty = 1$. Setting $N_\infty = 1$ is another simplifying convention, which we adopt in the remainder of this paper. It does not alter the reasoning, but saves a lot of writing.

Also, in all analytic results, we assume that the density $\rho$ (or, equivalently $n$) is a non increasing function of the radius $R$. This assumption excludes homoclinic-type solutions discovered for the test-fluid case in \cite{CS} and analyzed in some detail in \cite{CMS}. In general, there is no reason preventing the existence of homoclinic-type solutions (characterized by a non monotonic dependence of the density on the radius $R$) in the full self-gravitating case, and indeed, in Sec.\ \ref{numerics} we give a numerical example of such a solution.

\section{Sonic points for the equation of state 
$p=K\varrho^\Gamma .$}

The infall velocity $U$ reads \cite{malec}
\begin{equation}
U=U_\ast \frac{R^2_\ast}{R^2}
\left(  \frac{1 + \frac{\Gamma}{a^2}}{1 + \frac{\Gamma}{a^2_\ast} } \right)^\frac{1}{\Gamma - 1}.
\label{U}
\end{equation}
Here $U_\ast$ is the negative square root. From the relation between the pressure and the
energy density, one obtains, using  equation (\ref{bernoulli})
\begin{equation}
\varrho = \varrho_{\infty } \left( \frac{a}{a_\infty} \right)^\frac{2}{\Gamma - 1} = \varrho_\infty
\left( - \frac{\Gamma}{a_\infty^2} + \frac{\frac{\Gamma}{a_\infty^2}  + 1}{N^\kappa} \right)^ \frac{1}{\Gamma - 1},
\label{rho}
\end{equation}
where the constant  $\varrho_\infty$ is equal to the mass density of a collapsing fluid at the boundary $R_\infty$. The steady fluid is described by Eqs.\ (\ref{bernoulli}), (\ref{U}), and (\ref{rho}). They constitute an integro-algebraic system of equations,
with a bifurcation point at the sonic point, where two branches (identified as accretion
or wind) do cross; that is a feature   present also in models with a test fluid \cite{bondi}--\cite{malec}. That requires some caution in doing numerics and a careful selection of the solution branch.  

Using (\ref{rho}) and $a^2 = \Gamma p/\varrho$, we immediately get  the mass accretion rate adapted to the equation of state $p=K\varrho^\Gamma $ \cite{malec}:
\begin{eqnarray}
\dot m & = & - 4 \pi R^2_\ast \rho_\infty \left( 1 - \frac{3m_\ast}{2 R_\ast} + c_\ast \right)^{1/2} \nonumber \\
& & \times U_\ast \left( \frac{a_\ast}{a_\infty}
\right)^\frac{2}{\Gamma - 1} \left(1 + \frac{a^2_\ast}{\Gamma} \right) \frac{N_\ast}{k_\ast}.
\label{4.18a}
\end{eqnarray}

It will be proven below that, under suitable conditions, significant information about a full system with backreaction can be obtained by investigating steady flows with the backreaction being ignored. The main results will be formulated in the two forthcoming subsections. They demand different boundary conditions. The boundary conditions $U^2_\infty \ll m/R_\infty \ll a^2_\infty $ are required to prove Theorem 4, using a proof scenario outlined in \cite{PRD2006}. In addition we consider $\Gamma \in [1,5/3]$ and assume that the sonic radius $R_\ast $ is not smaller than $22m/3$.  

In Lemma 2 we demand that $2400^{(\Gamma-1)/\Gamma}\left( m/R_\infty \right)^{3(\Gamma -1)/\Gamma} \ll a_\infty^2 $; this might be weaker than before for $2\ge \Gamma \ge 3/2$, but it is a stronger condition than the former one, for $\Gamma < 3/2$. Yet stronger condition (for any exponent $\Gamma$,  $1<\Gamma \le  2$) $\left( 6m/R_\infty \right)^{(\Gamma -1)} \ll a_\infty^2 $ is imposed in the proof of Theorem 3, but this allows one to have $R_\ast$ as small as one wishes to have, in contrast to Theorem 4.  Boundary conditions in Lemma 6 and Theorem 7 are more complex.

\subsection{A bound on $k$}

\begin{lem}
\begin{enumerate}[i)]
\item Assume an accreting solution, with a monotonically decreasing mass density $\varrho $, $\Gamma \in (1, 2]$, and asymptotic data $\varrho_\infty, R_\infty, m $ and $a_\infty$.
\item Assume that $R_\infty \gg m $.
\item Let there exists a sonic point at a radius $R_\ast $ and mass $m_\ast$.
\end{enumerate}
Then for $R\ge R_\ast$,
\begin{equation}
k\ge \inf \left\{ \sqrt{0.9}, \left( 1-\frac{2m_\ast}{R_\ast} -\frac{2400m^2\left( m-m_\ast\right)}{R_\infty^3a_\infty^{2/(\Gamma -1)}}\right)^{1/2} \right\}.
\label{boundk}
\end{equation}
If in addition $2400^{\frac{\Gamma -1}{\Gamma}} \left(\frac{m}{R_\infty }\right)^{\frac{3\left(\Gamma -1\right)}{\Gamma}} \ll a_\infty^2$,
then 
\begin{equation}
k\ge \inf \left\{ \sqrt{0.9}, \left( 1-\frac{2m_\ast}{R_\ast}   \right)^{1/2} \right\}.
\label{boundkklemma2}
\end{equation}
\end{lem}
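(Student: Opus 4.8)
The plan is to bound $k$ from below pointwise for $R\ge R_\ast$ by discarding the nonnegative velocity contribution in (\ref{p}) and then controlling the compactness $2m(R)/R$. Writing $k^2 = 1 - 2m(R)/R + U^2 \ge 1 - 2m(R)/R$, the task reduces to a uniform upper bound on $2m(R)/R$ over $[R_\ast,R_\infty]$. The decisive move is to split this interval at the radius $R_0=20m$, where $2m/R_0 = 1/10$. On the outer piece $R\ge R_0$ one has $2m(R)/R \le 2m/R \le 1/10$ because $m(R)\le m$, hence $k^2\ge 9/10$, which produces the first entry $\sqrt{0.9}$ of the infimum at once. All the real work is then confined to the inner shell $R_\ast\le R< R_0$, where $R$ is now trapped below a fixed multiple of $m$; this restriction is exactly what will generate the $m^2$ in the correction term.

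On the inner shell I would decompose $2m(R)/R \le 2m_\ast/R_\ast + \Delta(R)$ with $\Delta(R):=2(m(R)-m_\ast)/R$ (using $R\ge R_\ast$ in the first term) and estimate $\Delta$ through the density. First, since $\varrho$ is assumed nonincreasing, $a^2=\Gamma K\varrho^{\Gamma-1}$ is nonincreasing too, so $a\le a_\ast$ for $R\ge R_\ast$; as the sonic point lies outside the horizon, Lemma~1 gives $a_\ast\le1$, and with (\ref{rho}) this yields $\varrho\le\varrho_\ast=\varrho_\infty(a_\ast/a_\infty)^{2/(\Gamma-1)}\le\varrho_\infty a_\infty^{-2/(\Gamma-1)}$. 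Second, I would bound $\varrho_\infty$ itself: monotonicity gives $\varrho\ge\varrho_\infty$ throughout the cloud, so (\ref{masseq}) yields $m-m_\ast = 4\pi\int_{R_\ast}^{R_\infty}s^2\varrho\,ds \ge \frac{4\pi}{3}\varrho_\infty(R_\infty^3-R_\ast^3)$, i.e.\ $\varrho_\infty \lesssim 3(m-m_\ast)/(4\pi R_\infty^3)$ (the inner shell being nonempty only when $R_\ast<20m\ll R_\infty$, so $R_\infty^3-R_\ast^3\approx R_\infty^3$). Finally, using $\varrho\le\varrho_\ast$ and $R<R_0$, one gets $m(R)-m_\ast \le \frac{4\pi}{3}\varrho_\ast R^3$, hence $\Delta(R)\le \frac{8\pi}{3}\varrho_\ast R^2 \le \frac{8\pi}{3}\varrho_\ast R_0^2$; substituting the two density estimates and $R_0=20m$ collapses this to $\Delta(R)\le 2400\,m^2(m-m_\ast)/(R_\infty^3 a_\infty^{2/(\Gamma-1)})$ (indeed a constant near $800$ already suffices, so $2400$ holds with room to spare). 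Thus $k^2\ge 1 - 2m_\ast/R_\ast - \Delta(R)$ on the inner shell, and taking the infimum of the outer and inner bounds gives (\ref{boundk}).

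For the second claim I would show the correction is negligible. Since $(m-m_\ast)/m\le1$, it is at most $2400(m/R_\infty)^3 a_\infty^{-2/(\Gamma-1)}$; raising the hypothesis $2400^{(\Gamma-1)/\Gamma}(m/R_\infty)^{3(\Gamma-1)/\Gamma}\ll a_\infty^2$ to the power $\Gamma/(\Gamma-1)$ gives $2400(m/R_\infty)^3 \ll a_\infty^{2\Gamma/(\Gamma-1)}$, so the correction is $\ll a_\infty^2\ll1$ and may be dropped, leaving (\ref{boundkklemma2}). I expect the main obstacle to be the sharp form of the inner estimate: the naive global bound $\varrho\le\varrho_\ast$ integrated all the way to $R_\infty$ only yields the far weaker $\Delta\lesssim 2(m-m_\ast)/(R_\infty a_\infty^{2/(\Gamma-1)})$, and it is precisely the two-region split---confining the dangerous part of the domain to $R\lesssim m$---that supplies the extra compactness factor $(m/R_\infty)^2$ needed to reach the stated correction. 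Verifying that the auxiliary approximations $k_\infty\approx1$ and $N_\infty=1$ underlying (\ref{rho}) are consistent with these bounds, and tracking the numerical constant, is where the bookkeeping must be done with care.
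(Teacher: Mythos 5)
Your proposal is correct and follows essentially the same route as the paper's proof: the same split of $[R_\ast,R_\infty]$ at $R_0=20m$ (with the outer region giving $\sqrt{0.9}$ immediately), the same density bound $\varrho \le \varrho_\infty a_\infty^{-2/(\Gamma-1)}$ obtained from monotonicity plus Lemma 1, and the same use of $\frac{4\pi}{3}\varrho_\infty\left(R_\infty^3-R_\ast^3\right)\le m-m_\ast$ to control $\varrho_\infty$. The only difference is bookkeeping---you bound $\left(R^3-R_\ast^3\right)/R$ by $R^2$ rather than by $3(R-R_\ast)R$ as the paper does, which yields the constant $800$ in place of $2400$ and hence implies the stated inequality with room to spare.
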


\begin{proof}
The speed of sound is smaller than 1 outside a black hole (see Lemma 1). Thus we can bound the mass density, using (\ref{rho}):
\begin{equation}
\varrho \le \varrho_\infty \left( \frac{1}{a^2_\infty}\right)^{\frac{1}{\Gamma -1}}.
\label{boundrho1}
\end{equation}
The mass function $m(R)$ can be written as
\begin{equation}
m(R)=m_\ast +4\pi \int_{R_\ast}^R\varrho r^2dr.
\end{equation}
Thus one obtains, using (\ref{boundrho1}) and an obvious inequality
\begin{equation}
\frac{4\pi}{3} \varrho_\infty \left(R_\infty^3-R_\ast^3\right)\le m-m_\ast
\label{obvious}
\end{equation}
(note that $\varrho \ge \varrho_\infty$), the following estimate:
\begin{equation}
m(R)\le m(R_\ast)+\frac{m-m_\ast}{a_\infty^{2/(\Gamma -1)}\left(R_\infty^3-R_\ast^3\right)}\left( R^3-R_\ast^3\right).
\label{boundrho2}
\end{equation}
Let us   first estimate $k$ in the interval   $(R_\ast,20m)$. We need to estimate  the mass function in the interval 
 $(R_\ast, 20m)$.
Approximating $R^3-R_\ast^3=(R-R_\ast)(R^2+RR_\ast+R_\ast^2)\le 3(R-R_\ast)R^2$, we get the bound 
\begin{equation}
m(R)\le m(R_\ast)+\frac{\left(m-m_\ast\right) \left( R-R_\ast\right) }{a_\infty^{2/(\Gamma -1)}R_\infty^3 }3\left( 20m  \right)^2.
\label{boundrho3}
\end{equation}
 This implies the following bound onto the function $k(R)$ for $R\in (R_\ast, 20m)$:
\begin{eqnarray}
k(R)& = & \sqrt{1-2\frac{m(R)}{R}+U^2} \nonumber\\
&\ge& \sqrt{1-2\frac{m_\ast}{R_\ast}-\frac{\left(m-m_\ast\right)   }{a_\infty^{2/(\Gamma -1)}R_\infty^3 }  2400m^2  }\nonumber\\
& \ge & \sqrt{1-2\frac{m_\ast}{R_\ast}-\frac{ 2400m^3 }{a_\infty^{2/(\Gamma -1)}R_\infty^3}}.
\label{boundk1}
\end{eqnarray}
This becomes
\begin{equation}
k(R)\ge \sqrt{1-2\frac{m_\ast}{R_\ast}}, 
\label{boundk2}
\end{equation}
if the third term in the square root in (\ref{boundk1}) is much smaller than $a^2_\infty$ (beware that $m_\ast/R_\ast$ is not smaller than $a^2_\infty $), that is if  
\begin{equation}
2400^{\frac{\Gamma -1}{\Gamma }} \left(\frac{ m}{R_\infty }\right)^{\frac{3\left(\Gamma -1\right)}{\Gamma }}\ll a_\infty^2.
\label{est1}
\end{equation} 
For $R \ge 20m$, and thus also for $R \ge R_\ast \ge 20m$, we get 
\begin{equation}
k(R)\ge \sqrt{1-\frac{2m}{20m}}= \sqrt{0.9}
\label{est2}
\end{equation}
---the first term in bound (\ref{boundkklemma2}). This accomplishes the proof of Lemma 2.
\end{proof}

\begin{remark}
In the case of $\Gamma =5/3$ we get from (\ref{est1}) the bound $a_\infty^2 \gg 25(m/R_\infty)^{1.2}$.
\end{remark}

\subsection{Main result}

 We assume that there are  no apparent horizons at the immediate vicinity of a sonic horizon---we put $R_\ast \ge 4m_\ast$---but otherwise there are no restrictions onto $R_\ast $.

\begin{thm}
 Let the mass density $\varrho $ be monotonically decreasing.
\begin{enumerate}[i)]
\item Assume $\Gamma \in (1, 2]$, and   asymptotic data
 $\varrho_\infty, R_\infty, m $ and $a_\infty$.
\item Assume $R_\infty \gg m $ and $6^{(\Gamma -1)/2}  \left(\frac{m}{R_\infty }\right)^{\frac{\left(\Gamma -1\right)}{2}}\ll a_\infty$.
\item Let there exists a sonic point  at $R_\ast >4m_\ast$.
\end{enumerate}
Then the  sonic point parameters $a^2_\ast$, $U^2_\ast$, and $m_\ast/R_\ast$ in the above model with backreaction are essentially the same as in the test fluid  accretion with the same asymptotic data.
\end{thm}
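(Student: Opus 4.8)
The plan is to reduce the statement to two smallness estimates, after which everything follows algebraically. Write the exact sonic‑point relation $a_\ast^2(1-\tfrac{3m_\ast}{2R_\ast}+c_\ast)=U_\ast^2=\tfrac{m_\ast}{2R_\ast}+c_\ast$ with $c_\ast=2\pi R_\ast^2 p_\ast$, and recall from (\ref{lapse}) that the lapse differs from the mean curvature only through the matter integral $I_\ast\equiv 4\pi\int_{R_\ast}^{R_\infty}(p+\varrho)\,s\,ds/k^2$. If $c_\ast$ is dropped, this relation collapses to (\ref{crit}), giving $U_\ast^2=m_\ast/2R_\ast$, $m_\ast/2R_\ast=a_\ast^2/(1+3a_\ast^2)$, and hence, via (\ref{p}), $k_\ast^2=1-\tfrac{3m_\ast}{2R_\ast}=1/(1+3a_\ast^2)$. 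If in addition $I_\ast$ is negligible, then $N_\ast\approx k_\ast$, and the Bernoulli equation (\ref{bernoulli}) at the sonic point becomes the closed transcendental equation $a_\ast^2=-\Gamma+(\Gamma+a_\infty^2)(1+3a_\ast^2)^{\kappa/2}$, depending only on $a_\infty$ and $\Gamma$. So the whole task is to bound $c_\ast$ and $I_\ast$ by the single condition in hypothesis (ii).

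For $c_\ast$ I would use $p_\ast=a_\ast^2\varrho_\ast/\Gamma$ to write the \emph{exact} identity $c_\ast/a_\ast^2=2\pi R_\ast^2\varrho_\ast/\Gamma$, and then control the density by the asymptotic data. Equation (\ref{rho}) together with $a_\ast\le 1$ (Lemma 1) gives $\varrho_\ast\le\varrho_\infty a_\infty^{-2/(\Gamma-1)}$, while (\ref{obvious}) bounds $\varrho_\infty\lesssim m/R_\infty^3$; using $R_\ast\le R_\infty$ this yields $c_\ast/a_\ast^2\lesssim (m/R_\infty)\,a_\infty^{-2/(\Gamma-1)}$, which is $\ll 1$ exactly when $(6m/R_\infty)^{\Gamma-1}\ll a_\infty^2$, i.e.\ under hypothesis (ii). Because the sound speed grows inward, $a_\ast^2\ge a_\infty^2$, so $c_\ast$ is also negligible next to $m_\ast/R_\ast\sim a_\ast^2$, which justifies the collapse to (\ref{crit}).

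For $I_\ast$ I would invoke Lemma 2 (admissible here, since the condition in (ii) is stronger than the one required for (\ref{boundkklemma2})): as $R_\ast>4m_\ast$, this gives $k^2\ge 1/2$ throughout $[R_\ast,R_\infty)$. Bounding the enthalpy $p+\varrho\le 2\varrho$ (valid since $a^2/\Gamma\le 1$) and again $\varrho\le\varrho_\infty a_\infty^{-2/(\Gamma-1)}$, the integral obeys $I_\ast\le (6m/R_\infty)\,a_\infty^{-2/(\Gamma-1)}$, where the constant $6$ arises precisely from the factor $2$ in the enthalpy, the factor $2$ from $k^{-2}$, and the radial integration. Hence $I_\ast\ll 1$ under the same condition (ii), so $N_\ast\approx k_\ast$ and the Bernoulli equation closes as above.

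Finally I would identify the resulting equation with the test‑fluid one: for a test fluid the stress--energy does not source gravity, so the right‑hand side of (\ref{N/k}) vanishes and $N=k$ holds exactly, whence its critical point satisfies the identical closed equation $a_\ast^2=-\Gamma+(\Gamma+a_\infty^2)(1+3a_\ast^2)^{\kappa/2}$ with the same $a_\infty$ and $\Gamma$. Selecting the same transonic branch ($a_\ast^2\le 1$), the two models share the same $a_\ast^2$, and then $U_\ast^2=a_\ast^2/(1+3a_\ast^2)$ and $m_\ast/R_\ast=2a_\ast^2/(1+3a_\ast^2)$ agree as well. I expect the main obstacle to be the dual role of hypothesis (ii): it must simultaneously force both $c_\ast$ and $I_\ast$ to be small, and the delicate point is that $c_\ast$ must be small \emph{relative to} $a_\ast^2$ rather than merely relative to unity—this is why the inequality $a_\ast^2\ge a_\infty^2$ and the sharp exponent $2/(\Gamma-1)$ in the bound on $\varrho_\ast$ are indispensable. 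Tracking the constant $6$ through both estimates and confirming that the same transonic root is selected in the two models are the remaining technical hurdles.
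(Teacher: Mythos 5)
Your proposal is correct and follows essentially the same route as the paper's own proof: Lemma~2 together with $R_\ast>4m_\ast$ gives $k^2\ge 1/2$, the enthalpy/density bounds $p+\varrho\le 2\varrho\le 2\varrho_\infty a_\infty^{-2/(\Gamma-1)}$ and inequality (\ref{obvious}) yield the paper's estimates (\ref{e11a}) for the lapse integral and the bound $c_\ast\ll a_\ast^2$, and the resulting closed Bernoulli relation (your equation in $a_\ast^2$ is algebraically identical to the paper's Eq.\ (\ref{sonica}) in $y=1-3m_\ast/(2R_\ast)$) depends only on $\Gamma$ and $a_\infty$, forcing agreement with the test-fluid sonic point. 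The only cosmetic difference is that you phrase the final identification via $N=k$ holding exactly for test fluids, whereas the paper argues via the $\varrho_\infty$-independence of the coefficients of (\ref{sonica}); these are equivalent.
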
  
  
\begin{remark}
Notice, that condition ii) is stronger than the similar assumption of Lemma 2 (see also (\ref{est1}) for a comparison).
\end{remark}
   
\begin{proof}
Condition ii) allows us to use estimate (\ref{boundk2}) of Lemma 2:
\begin{equation}
k \ge k_\mathrm{inf} \equiv \inf \left\{ \sqrt{0.9}, \left( 1-\frac{2m_\ast}{r_\ast}   \right)^{1/2} \right\}.
\label{boundkka}
\end{equation}
Condition iii) yields now   $k_\mathrm{inf}=\sqrt{0.5}$. Employing this fact together with the relation $p \le \varrho$ and estimate (\ref{boundrho1}), we obtain the following chain of inequalities:
\begin{eqnarray}
\lefteqn{4\pi\int_{R_\ast}^{R_\infty} (p + \varrho) \frac{s}{k^2} ds \le 4 \pi \frac{1}{0.5} \int_{R_\ast}^{R_\infty} (  p + \varrho) s ds \le } \nonumber \\
&& 16\pi \int_{R_\ast}^{R_\infty}  \varrho_\infty \frac{1}{a_\infty^{\frac{2}{\Gamma-1}}} s ds \le  6 \frac{m - m(R_\ast)}{ a_\infty^{\frac{2}{\Gamma-1}}R_\infty}\ll 1.  
 \label{e11a}
 \end{eqnarray}
In order to get the last inequality, we exploited the inequality $\frac{4\pi}{3} \varrho_\infty \left(R_\infty^3-R_\ast^3\right)\le m-m_\ast$.
 
Inserting (\ref{e11a}) into the expression for the lapse $N$, we 
 get the estimate
\begin{equation}
N(R_\ast)\approx \sqrt{1-2\frac{m_\ast}{R_\ast}+U^2_\ast}.
\label{e2a}
\end{equation}
Inserting that into the Bernoulli-type equation (\ref{bernoulli}), we obtain the equation 

\begin{equation}
a^2_\ast  = - \Gamma + \frac{\Gamma + a^2_\infty}{{\sqrt{1 - 
\frac{2   m_\ast}{R_\ast} +  U^2_\ast}}^\kappa}.
\label{estimatec}
\end{equation}

Using $p = a^2 \varrho/\Gamma$, $a^2_\ast \le 1$ and (\ref{obvious}), one gets
\begin{equation}
c_\ast = \frac{a^2_\ast}{\Gamma} 2 \pi R^2_\ast \varrho_\ast \le   a^2_\ast
\frac{3m}{2 a_\infty^{\frac{2}{\Gamma -1}}} \frac{R_\ast^2}{R_\infty^3} \ll a^2_\ast .
\end{equation}
That implies that $c_\ast$ can be put to zero in all relations between characteristics of the sonic point. Inserting this information into the Bernoulli equation (\ref{bernoulli}), one obtains that at a sonic point \cite{malec}
\begin{equation}
1 + y (3 \Gamma - 1) = 3 \left(a^2_\infty + \Gamma \right) y^\frac{\Gamma + 1}{2 \Gamma},
\label{sonica}
\end{equation}
where $y = 1 - 3 m_\ast / (2R_\ast)$. Coefficients of this algebraic equation do not depend on the asymptotic mass density, and therefore $y$ is independent of $\varrho_\infty$. The sonic mass $m_\ast$ and the sonic radius $R_\ast$ are clearly dependent on $\varrho_\infty$, but their ratio is constant. In fact, $m_\ast/ R_\ast$ must be he same as in the case in which the backreaction can be neglected, that is when the mass of the fluid outside the black hole is small in comparison to the total mass. The same conclusion holds true also for other parameters of the sonic point, the fluid velocity $U_\ast$ and the speed of sound $a_\ast$. In conclusion: $a^2_\ast $, $U^2\ast$ and $m_\ast/R_\ast$ can be inferred from a suitable steady flow with a test fluid.
\end{proof}

If backreaction can be neglected, then the mass accretion rate (\ref{4.18a}) reads \cite{PRD2006}
\begin{eqnarray}
\dot m & = &  \pi m^2_\ast \frac{\rho_\infty}{a^3_\infty } \left( 1 +3a^2_\ast \right) \nonumber \\
& & \times   \left( \frac{a_\ast}{a_\infty}
\right)^\frac{5-3\Gamma }{\Gamma - 1} \left(1 + \frac{a^2_\ast}{\Gamma} \right) .
\label{4.18b}
\end{eqnarray}
This is the same formula as that for the accretion of test fluids---see Eq.\ (6.1) in \cite{malec}---but with the mass $m_\ast$ instead of the asymptotic mass $m$.

\subsection{Sonic points outside $22m/3$.}
\begin{thm}
Let the mass density $\varrho $ be monotonically decreasing.
\begin{enumerate}[i)]
\item Assume $\Gamma \in [1, 5/3]$ and asymptotic data $\varrho_\infty$, $R_\infty$, $m$, and $a_\infty$, and let $U_\infty^2 \ll  \frac{m }{R_\infty} \ll a_\infty^2$.
\item Assume that the sonic point is located outside the apparent horizon, $R_\ast \ge 22 m/3$.
\item Furthermore, let the mass within the sonic sphere be given by $m_\ast =\frac{m}{1+\beta }$, where $\beta >0$.
\item Define
\begin{equation}
D\equiv \frac{3}{4\Gamma}
(\Gamma - 1)^2 (9\Gamma - 7)+3(5 - 3 \Gamma ) .
\label{b1}
\end{equation}
and
\begin{eqnarray}
&& F(\Gamma, a^2_\infty)\equiv \nonumber\\
 &&\left(\frac{2}{5 - 3 \Gamma +\sqrt{\left(5 - 3 \Gamma\right)^2+8a^2_\infty D}} \right)^{\frac{1}{\Gamma-1}} .
\end{eqnarray} 
Assume that
 $R_\infty \gg m\delta$, where $\delta =\frac{33 }{8}F(\Gamma, a^2_\infty)$.
\end{enumerate}
Then the sonic point parameters $a^2_\ast$, $U^2_\ast$, and $m_\ast / R_\ast$ in the above model with backreaction are the same as in the test fluid  accretion with the same asymptotic data \cite{remark}.
\end{thm}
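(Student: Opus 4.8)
The plan is to follow the route used in the proof of Theorem 3 (and the scenario of \cite{PRD2006}), reducing everything to the single estimate that the lapse at the sonic point satisfies $N_\ast \approx k_\ast = \sqrt{1 - 2m_\ast/R_\ast + U_\ast^2}$ and that $c_\ast$ may be dropped. Once this holds, inserting $N_\ast \approx k_\ast$ into the Bernoulli relation (\ref{bernoulli}) produces the closed algebraic equation (\ref{sonica}) for $y = 1 - 3m_\ast/(2R_\ast)$, whose coefficients involve only $\Gamma$ and $a_\infty^2$. Since (\ref{sonica}) does not see $\varrho_\infty$, the value of $y$, hence $m_\ast/R_\ast$, and then $a_\ast^2 = (1-y)/(3y)$ and $U_\ast^2 = (1-y)/3$, must coincide with the test-fluid values. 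The whole theorem therefore rests on controlling the exponent $I \equiv 4\pi\int_{R_\ast}^{R_\infty}(p+\varrho)\,\frac{s}{k^2}\,ds$ appearing in (\ref{lapse}) and showing $I \ll 1$, but now under the \emph{weak} boundary condition $m/R_\infty \ll a_\infty^2$ rather than the stronger condition of Theorem 3.

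The geometric factors are the easy part. Since $R \ge R_\ast \ge 22m/3$ and $m(R)\le m$, one has $k^2 \ge 1 - 2m/R_\ast$ throughout the integration range, so $1/k^2$ is bounded by an explicit constant; and $p\le\varrho$ (from $a^2\le 1$, Lemma 1, together with $\Gamma\ge 1$) gives $p+\varrho\le 2\varrho$. Using monotonicity of the density, $\varrho\le\varrho_\ast$ on $[R_\ast,R_\infty]$, and trading $\varrho_\infty$ for $m-m_\ast$ via (\ref{obvious}), the exponent collapses to a bound of the schematic form $I \lesssim \frac{m}{R_\infty}\,\frac{\varrho_\ast}{\varrho_\infty}$, where $\varrho_\ast/\varrho_\infty = (a_\ast^2/a_\infty^2)^{1/(\Gamma-1)}$ by (\ref{rho}). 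Thus $I \ll 1$ will follow from condition iv), $R_\infty \gg m\delta$ with $\delta=\frac{33}{8}F$, \emph{provided} $F(\Gamma,a_\infty^2)$ is a genuine a priori upper bound on the density contrast $\varrho_\ast/\varrho_\infty$.

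Producing that a priori bound is the crux, and it is where the restriction $\Gamma\le 5/3$ enters. The naive estimate $a_\ast^2\le 1$ of Lemma 1 is far too weak here: used in the integral it would reinstate the factor $a_\infty^{-2/(\Gamma-1)}$ and force the strong condition of Theorem 3. Instead I would first show that $c_\ast = 2\pi R_\ast^2 p_\ast$ is negligible (bound it by the same density/radius estimate and invoke $R_\infty\gg m\delta$), so that at the sonic point $a_\ast^2 = \frac{m_\ast/(2R_\ast)}{1-3m_\ast/(2R_\ast)}$ and the Bernoulli relation reduces to the transcendental sonic equation (\ref{sonica}). Writing $y=1-\eta$ and using the concavity of $y\mapsto y^{(\Gamma+1)/(2\Gamma)}$ (valid since $0<\frac{\Gamma+1}{2\Gamma}<1$) to majorize (\ref{sonica}) yields an explicit upper bound on $\eta$, hence on $a_\ast^2$ and on $\varrho_\ast/\varrho_\infty$. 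The hypothesis $\Gamma\le 5/3$ is essential: it guarantees $5-3\Gamma\ge 0$, so the majorizing quadratic has a positive root, and its second-order term is what I expect to generate the coefficient $D$ of (\ref{b1}) and the discriminant $\sqrt{(5-3\Gamma)^2+8a_\infty^2D}$ in $F$. The key qualitative output is that $a_\ast^2$ stays comparable to $a_\infty^2$---the relativistic analogue of the Bondi relation $a_\ast^2\simeq 2a_\infty^2/(5-3\Gamma)$---so that $\varrho_\ast/\varrho_\infty$ is bounded by a finite quantity depending only on $\Gamma$ and $a_\infty^2$, however small $a_\infty$ may be.

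With $I\ll 1$ in hand the remaining steps are routine: (\ref{lapse}) gives $N_\ast\approx k_\ast$, insertion into (\ref{bernoulli}) gives (\ref{estimatec}) and then (\ref{sonica}), and the $\varrho_\infty$-independence of its coefficients yields the stated conclusion. I expect the main difficulty---and the reason this result is split off from Theorem 3---to lie entirely in the a priori control of $a_\ast^2$ for $\Gamma\le 5/3$: one must bound $c_\ast$ and $a_\ast^2$ simultaneously, which is mildly circular because the sharp form of (\ref{sonica}) already presupposes $c_\ast\approx 0$. The cleanest way to close this is a short bootstrap, first a crude bound from Lemma 1 and $R_\ast\ge 22m/3$, then the refined bound encoded in $F$. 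The restriction to $\Gamma\in[1,5/3]$ is not merely technical: for $\Gamma>5/3$ the sign of $5-3\Gamma$ reverses and the majorant degenerates, consistent with the qualitatively different solution structure known in that range.
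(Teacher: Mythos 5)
Your skeleton matches the paper's: control the exponent $I\equiv 4\pi\int_{R_\ast}^{R_\infty}(p+\varrho)\frac{s}{k^2}\,ds$, deduce $N_\ast\approx k_\ast$ and $c_\ast\approx 0$, and read off the $\varrho_\infty$-independence from the algebraic sonic equation. You also correctly identify the crux: an a priori bound $\varrho\le F(\Gamma,a_\infty^2)\,\varrho_\infty$ on the density contrast, i.e.\ a Bondi-type statement $a_\ast^2=O(a_\infty^2)$. But your method for producing that bound does not close, and the circularity you call ``mild'' is fatal to it. You propose to derive the sharp sonic equation (\ref{sonica}) first and then majorize it; yet (\ref{sonica}) is available only after one already knows both $I\ll1$ and $c_\ast\ll a_\ast^2$, and under the weak hypothesis iv) neither follows from the crude Lemma~1 bound $a_\ast^2\le1$ with which you propose to start your bootstrap. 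With $a_\ast^2\le 1$ one only gets $\varrho_\ast/\varrho_\infty\le a_\infty^{-2/(\Gamma-1)}$, hence $I\lesssim (m/R_\infty)\,a_\infty^{-2/(\Gamma-1)}$; hypothesis iv) only gives $m/R_\infty\ll 1/F$, and $F$ stays bounded as $a_\infty\to 0$ (for $\Gamma<5/3$ it tends to $(5-3\Gamma)^{-1/(\Gamma-1)}$), so this first iterate is enormous rather than small---exactly the obstruction you yourself note when you dismiss Lemma~1 as ``far too weak.'' The same failure blocks the preliminary claim $c_\ast\ll a_\ast^2$, which likewise needs $m R_\ast^2/R_\infty^3\ll a_\infty^{2/(\Gamma-1)}$, essentially the strong condition of Theorem~3. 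So the bootstrap never gets its first rung.

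The paper breaks the circle with an idea absent from your proposal: it never needs $I\ll1$ on the first pass. Using assumption iii) ($m-m_\ast=\beta m_\ast$) it bounds $I\le 11\beta m_\ast/(4R_\ast)$---a quantity that may be of order one or larger---and then absorbs the factor $e^{-I}$ into an effective mass, via $\sqrt{1-3m_\ast/(2R_\ast)}\,e^{-11\beta m_\ast/(4R_\ast)}\ge\sqrt{1-2\hat m/R_\ast+\hat U^2}$ with $\hat m=m_\ast\left(1+11\beta/3\right)$. The resulting inequality (\ref{estimate}) is recognized as the sonic-point relation for \emph{test-fluid} accretion in a fiducial Schwarzschild geometry of mass $\hat m$, and Theorem~2 of \cite{malec}---the test-fluid bound valid for sonic points outside event horizons---yields the bi-quadratic inequality (\ref{theorem3}) and hence $a_\ast^2\le 4a_\infty^2/E$. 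This is where $D$ and the discriminant actually come from, not from a concavity majorization of (\ref{sonica}). The comparison also explains the precise constant in hypothesis ii): $R_\ast\ge 22m/3$ guarantees $R_\ast>2\hat m$ (since $\hat m<11m/3$ for every $\beta>0$), i.e.\ the fiducial sonic point lies outside the fiducial horizon, which is the condition under which the cited test-fluid theorem applies. In your sketch, assumption ii) is used only to bound $1/k^2$ and assumption iii) is never used at all---a telltale sign that the argument as proposed cannot reproduce the theorem.
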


\begin{proof}
Condition ii) implies $k \ge \sqrt{8/11}$. Using this, and employing the relation $p \le \varrho$, we can obtain a chain of inequalities 
\begin{eqnarray}
\lefteqn{4\pi\int_{R_\ast}^{R_\infty} (  p + \varrho) \frac{s}{k^2} ds \le 4\pi \frac{11 }{8} \int_{R_\ast}^{R_\infty} (  p + \varrho) s ds \le}  \nonumber \\
&& 8\pi \frac{11}{8} \int_{R_\ast}^{R_\infty}  \varrho s ds  \le   \frac{11 }{4} \frac{m - m(R_\ast)}{ R_\ast} = 11\beta \frac{m_\ast}{4R_\ast}.
 \label{e1}
 \end{eqnarray}
In order  to get  the last inequality we exploited the condition iii).
 
Inserting (\ref{e1}) into the expression for the lapse $N$, we 
 get the estimate
\begin{equation} 
N(R_\ast)\ge \sqrt{1-2\frac{m_\ast}{R_\ast}+U^2_\ast}
e^{-11 \beta \frac{m_\ast}{4R_\ast}}\label{e2b}
\end{equation}
 or (using $U^2_\ast\ge m_\ast/(2R_\ast)$)
\begin{equation} 
N(R_\ast)\ge \sqrt{1-\frac{3m_\ast}{2R_\ast} }
e^{-11\beta \frac{m_\ast}{4R_\ast}}.
\label{e3}
\end{equation} 
One can show that
\begin{eqnarray}
\lefteqn{\sqrt{1-\frac{3m_\ast}{2R_\ast} }
e^{-11 \beta \frac{m_\ast}{4R_\ast}}\ge} \nonumber\\
&& \sqrt{1-\frac{m_\ast}{R_\ast}\left(2+\frac{22\beta}{3}\right)  + \hat U^2},
\label{e4}
\end{eqnarray}
where
\begin{equation}
\hat U^2=\frac{m_\ast}{4R_\ast}\left( 2 + 22\frac{\beta}{3} \right).
\label{e5}
\end{equation}
We can apply estimate (\ref{e4}) in Eq.\ (\ref{bernoulli}); one arrives at
\begin{equation}
a^2_\ast \le - \Gamma + \frac{\Gamma + a^2_\infty}{\sqrt{ 1-\frac{m_\ast}{R_\ast}\left(2+
22\frac{  \beta}{3 }\right)  +\hat U^2 }^\kappa} .
\label{estimate}
\end{equation}
Solutions of this inequality are bounded from above by the solution of the equation 
\begin{equation}
\hat a^2  = - \Gamma + \frac{\Gamma + a^2_\infty}{{\sqrt{1 - 
\frac{2 \hat m}{R_\ast} + \hat U^2}}^\kappa},
\label{estimatea}
\end{equation}
where $\hat m =  m_\ast \left( 1 + 11 \beta/3 \right)$, at a point where $\hat a / \left(1 + 
3 \hat a^2 \right) = \hat m / (2 R_\ast) = \hat U^2$. 
The case of equality in Eq.\ (\ref{estimate}) is a sonic point equation for the test fluid accretion in a fiducial Schwarzschild geometry
\begin{equation}
ds^2=- \left( 1-\frac{\hat m}{R} \right) dt^2+\frac{dR^2}{1-\frac{\hat m}{R}}+R^2d\Omega^2.
\end{equation}
Note that there is a consistency condition $R_\ast > 2 \hat m $, that is  the sonic point in the fiducial metric is located outside its event horizon. It is satisfied---in this place we have to invoke assumption ii). If ii) is true then we are guaranteed that the sonic point exists outside the event horizon in this fiducial metric. This allows us to use the estimate of Theorem 2 of \cite{malec} that holds true for sonic points located outside event horizons. It appears that
\begin{equation}
 \hat a^2 \le \frac{2a^2_\infty}{5 - 3 \Gamma + \frac{3 \hat a^2 
(\Gamma - 1)^2 (9\Gamma - 7)}{4 \Gamma \left(1 + 3 \hat a^2 \right)}}.
\label{theorem2}
\end{equation}
Since   $a^2_\ast < \hat a^2$, we can conclude that
\begin{equation}
   a^2_\ast \le \frac{2a^2_\infty}{5 - 3 \Gamma + \frac{3 a^2_\ast
(\Gamma - 1)^2 (9\Gamma - 7)}{4 \Gamma \left(1 + 3  a^2_\ast \right)}}.
\label{theorem3}
\end{equation}
This is a bi-quadratic inequality. Standard reasoning yields the following estimate for $a^2_\ast$:
\begin{equation}
a_\ast^2\le \frac{4a^2_\infty}{5 - 3 \Gamma  -6a^2_\infty+\sqrt{\left(5 - 3 \Gamma-6a^2_\infty\right)^2+8a^2_\infty D}} .
\label{bi}
\end{equation}

This implies that in the nonrelativistic limit (understood as the condition $a^2_\infty\ll 5-3\Gamma $) the sonic speed satisfies  $a^2\ast \le 2a^2_\infty /(5-3\Gamma) $;  in fact, we have in this case the equality \cite{malec}.  In the case of 
$\Gamma =5/3$ one obtains  $a^2_\ast \le 1.12 a_\infty $, again in agreement with the former estimate of \cite{malec}.

Monotonically decreasing mass density implies that also  the speed of sound is nonincreasing.
Therefore 
\begin{eqnarray}
   a^2(R)&\le &\frac{4a^2_\infty}{E},
\label{theorem4}
\end{eqnarray}
where $E$ denotes the denominator of (\ref{bi}),
\begin{equation}
E=5 - 3 \Gamma -6a^2_\infty +\sqrt{\left(5 - 3 \Gamma -6a^2_\infty\right)^2+8a^2_\infty D} ,
\label{bi1}
\end{equation}
Inserting this  estimate into the first equality of (\ref{rho}) yields the bound on the matter density,
\begin{equation}
 \varrho(R) \le \varrho_\infty F(\Gamma, a^2_\infty),
 \label{bi2}
 \end{equation}
 where 
 \begin{equation}
 F(\Gamma, a^2_\infty)\equiv \left(\frac{4}{E} \right)^{\frac{1}{\Gamma-1}} .
 \label{theorem5}
 \end{equation}
One can check that the quantity $F$ varies from around 1 (for $\Gamma =1$ and small asymptotic values of the speed of sound) to a multiple of $1/a_\infty^{3/2}$ (for $\Gamma =5/3$). However, it  is smaller than 50, irrespective of $a_\infty $, if $\Gamma <1.6$.

This in turn allows one to replace the bound appearing in Eq. (\ref{e1}),
\begin{equation}
 4\pi\int_{R_\ast}^{R_\infty} ( p + \varrho) \frac{s}{k^2} ds \le 8\pi \frac{11 }{8} \int_{R_\ast}^{R_\infty}  \varrho s ds;
\label{theorem6}
\end{equation} 
using Eq.\ (\ref{theorem5}), we can write 
\begin{equation}
\int_{R_\ast}^{R_\infty}  \varrho sds\le  
\varrho_\infty 
 \frac{F}{2} ( R^2_\infty  -R^2_\ast);
\label{theorem7}
\end{equation}
obviously $\varrho_\infty \frac{1}{2} ( R^2_\infty  -R^2_\ast)\le \frac{m}{R_\infty}\frac{3}{8\pi}$, and inequality (\ref{theorem6}) can be replaced by
\begin{equation}
4\pi\int_R^{R_\infty} (  p + \varrho) \frac{s}{k^2 } ds \le 3F \frac{11}{8} \frac{m}{R_\infty}.  
\label{theorem8}
\end{equation} 
We see, invoking assumption iv), that the lhs of inequality (\ref{theorem8}) is very small and the lapse at the sonic point can be approximated by the mean curvature $k$.

Using the estimates $a^2_\ast <1$, (\ref{bi2}) and $\varrho_\infty \le 3m/(4\pi R^3_\infty )$, one gets
\begin{eqnarray}
c_\ast & = & \frac{a^2_\ast}{\Gamma} 2 \pi R^2_\ast \varrho_\ast \le F a^2_\ast 
\frac{3m}{4\pi R_\infty} \frac{R_\ast^2}{R_\infty^2} \ll \frac{Fm}{R_\infty}.
\end{eqnarray}
That implies that  $c_\ast$ can be put to zero in all relations between characteristics of the sonic point. Inserting 
this information into the Bernoulli equation (\ref{bernoulli}) one obtains that at a sonic point \cite{malec}
\begin{equation}
1 + y (3 \Gamma - 1) = 3 \left(a^2_\infty + \Gamma \right) y^\frac{\Gamma + 1}{2 \Gamma},
\label{sonicb}
\end{equation}
where $y = 1 - 3 m_\ast / (2R_\ast)$. Coefficients of this algebraic equation do not depend on  the asymptotic mass density and therefore $y$ is independent of $\varrho_\infty$. 

In conclusion: $a^2_\ast $, $U^2\ast $
 and $m_\ast / R_\ast$ can be inferred from a suitable steady flow with a test fluid. That ends the proof of Theorem 4.
\end{proof}

We believe that the assumption of Theorem 4,  that $R_\ast \ge 22m/3 $, can be relaxed, as suggested in \cite{PRD2006}.
  
\section{Characteristics of sonic points for the polytropic equation of state $p=Kn^\Gamma .$}
\label{pnproofs}

We shall restrict our attention in this Section to solutions  with the monotonically decreasing baryonic mass density $n$. We assume that the polytropic index $\Gamma$ belongs to the interval $(1,2]$.

\subsection{Estimation of baryonic and rest mass densities.}

\begin{lem}
Consider a spherically symmetric accretion of test fluids with the polytropic equation of state $p = K n^\Gamma$ in a Schwarzschild spacetime of mass $m$. Assume that the density $n$ is a nonincreasing function of the radius $R$. Let $\Gamma \in (1, 2]$, and let the asymptotic value of the speed of sound be $a_\infty$. Then outside the sonic sphere the speed of sound $a^2$ satisfies the bound 
\begin{equation}
a^2< \Gamma-1-e,
\label{an}
\end{equation}
where 
\begin{equation}
e \equiv \frac{\Gamma-1-a^2_\infty}{\sqrt{3\Gamma -2}}.
\label{an1}
\end{equation}
\end{lem}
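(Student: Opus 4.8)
The plan is to reduce the claim to an estimate at the sonic radius and then combine the $p$--$n$ Bernoulli relation (\ref{nnB}) with the general bound $a^2\le\Gamma-1$ valid for this equation of state. First I would observe that, for $p=Kn^\Gamma$, the speed of sound $a^2=\Gamma p/(p+\varrho)$ is a strictly increasing function of the baryonic density $n$; since $n$ is assumed nonincreasing in $R$, the quantity $a^2$ attains its maximum over the exterior region $R\ge R_\ast$ at the sonic radius. Hence it suffices to establish $a_\ast^2<\Gamma-1-e$, and the statement for all $R\ge R_\ast$ follows by monotonicity. I would also normalize $N_\infty=1$, so that (\ref{nnB}) reads $a^2=\Gamma-1-N(\Gamma-1-a_\infty^2)$, and note that $\Gamma-1-a_\infty^2>0$ (the asymptotic version of $a^2\le\Gamma-1$, strict for finite asymptotic density), which in particular guarantees $e>0$.

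Next I would determine the lapse at the sonic point. For a genuine test fluid the matter source in (\ref{N/k}) is dropped, so $N/k$ is constant and, after normalization, $N=k=\sqrt{1-2m/R+U^2}$ with $m(R)\equiv m$ the fixed Schwarzschild mass. Inserting the standard critical-point relations $U_\ast^2=m/(2R_\ast)$ and $U_\ast^2=a_\ast^2/(1+3a_\ast^2)$ then gives
\begin{equation}
N_\ast^2=1-\frac{3m}{2R_\ast}=1-3U_\ast^2=\frac{1}{1+3a_\ast^2}.
\end{equation}

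The estimate is now immediate. Using the general bound $a_\ast^2<\Gamma-1$ in the previous display yields $1+3a_\ast^2<3\Gamma-2$, hence $N_\ast=1/\sqrt{1+3a_\ast^2}>1/\sqrt{3\Gamma-2}$. Substituting this lower bound into the normalized Bernoulli relation $a_\ast^2=\Gamma-1-N_\ast(\Gamma-1-a_\infty^2)$ and using $\Gamma-1-a_\infty^2>0$ gives
\begin{equation}
a_\ast^2<\Gamma-1-\frac{\Gamma-1-a_\infty^2}{\sqrt{3\Gamma-2}}=\Gamma-1-e,
\end{equation}
which is the claim at $R_\ast$; the monotonicity from the first step then propagates it to every $R\ge R_\ast$.

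I expect the one delicate point to be the identification $N_\ast^2=1/(1+3a_\ast^2)$, which hinges on recalling that for a true test fluid the lapse equals the mean curvature $k$ (rather than the bare factor $\sqrt{1-2m/R}$), together with the correct insertion of the critical-point conditions. Everything afterward is elementary; the only inequalities that must be checked are the positivity of $3\Gamma-2$ and of $\Gamma-1-a_\infty^2$, both of which hold throughout the range $\Gamma\in(1,2]$.
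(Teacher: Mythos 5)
Your proof is correct and follows essentially the same route as the paper's: reduce to the sonic point via monotonicity of $a^2$ in $n$, use $N=k$ for the test fluid to get $N_\ast = 1/\sqrt{1+3a_\ast^2}$ from the critical-point relations, bound $N_\ast > 1/\sqrt{3\Gamma-2}$ via $a_\ast^2 < \Gamma-1$, and conclude from the normalized Bernoulli equation. The only cosmetic difference is that you state the monotonicity reduction up front, whereas the paper invokes it at the end.
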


\begin{proof}
Assuming $N_\infty = 1$, the Eq.\ (\ref{nnB}) reads
\begin{equation}
N=\frac{\Gamma -1-a^2}{\Gamma -1-a^2_\infty}.
\label{n1}
\end{equation}
Since we deal with a vacuum metric, we also have $N = k$---see Eq.\ (\ref{lapse}).

The monotonic falloff of $n$ implies that the speed of sound is monotonically decreasing (see the formula (\ref{nb1}) below). This implies---from Eq.\ (\ref{n1})---that the lapse $N$ is a nondecreasing function of $R$. It is larger than $k(R_\ast)$ which in turn is equal to $\sqrt{1-\frac{3m}{2R_\ast}}$. Notice that  
\begin{equation}
\frac{3m}{2R_\ast}= \frac{3a_\ast^2}{1+3a_\ast^2}
\label{n2}
\end{equation}
---see a discussion preceding Lemma 1. Therefore one obtains from  $N=k$  and Eq.\ (\ref{n2}) that $ N_\ast = \sqrt{\frac{1}{1+3a^2_\ast}}$. Notice that $a^2_\ast<\Gamma-1$; hence $N_\ast> 1/\sqrt{3\Gamma -2}$ and
\begin{equation}
 \frac{\Gamma -1-a^2_\ast}{\Gamma -1-a^2_\infty}> \frac{1}{\sqrt{3\Gamma -2}}.
\label{n3}
\end{equation}
The last inequality can be written as 
\begin{equation}
 a^2_\ast < \Gamma -1-e.
\label{n33}
\end{equation}
The bound in Eq.\ (\ref{an}) follows now directly from the inequality $a^2\le a^2_\ast$ and (\ref{n33}).
\end{proof}

We are now able to give a useful estimate of the baryonic mass density $n$, that is true both for a  test fluid in a Schwarzschild spacetime and for a selfgravitating fluid, provided that they satisfy the same equation of state and   possess the same speed of gas at the sonic point ($a_\ast$) and at the boundary ($a_\infty $). It is known (see, for instance Eq.\ (22) in \cite{KMRS}) that one can express $n$ as
\begin{equation}
n=n_\infty\left( \frac{\frac{\Gamma -1}{a^2_\infty}-1}{\frac{\Gamma -1}{a^2 }-1}\right)^{\frac{1}{\Gamma -1}}.
\label{nb1}
\end{equation}
Applying bound (\ref{an}) on the speed of sound, one arrives at the following estimate of the baryonic mass density
\begin{equation}
n\le n_\infty \left( \frac{\frac{\Gamma -1}{a^2_\infty}-1}{\frac{\Gamma -1}{\Gamma-1-e }-1}\right)^{\frac{1}{\Gamma-1}}.
\label{nb2}
\end{equation}
In the case of the polytropic equation of state $p=Kn^\Gamma$, the rest mass density is given by \cite{KMRS}
\begin{equation}
\varrho=n\left[ 1+\frac{a^2}{\Gamma\left( \Gamma-1-a^2\right)} \right].
\label{nb3}
\end{equation}
Combining Eqs.\ (\ref{nb1}) and (\ref{nb3}), one obtains 
\begin{equation}
\varrho=n_\infty\left( \frac{\frac{\Gamma -1}{a^2_\infty}-1}{\frac{\Gamma -1}{a^2 }-1}\right)^{\frac{1}{\Gamma -1}}\left[ 1+\frac{a^2}{\Gamma\left( \Gamma-1-a^2\right)} \right], 
\label{nb3a}
\end{equation}
and 
\begin{equation}
\varrho_\infty \equiv n_\infty \left[ 1+\frac{a^2_\infty}{\Gamma\left( \Gamma-1-a^2_\infty\right)} \right].
\label{nb3b}
\end{equation}

The use of estimates (\ref{an1}), (\ref{nb2}), and (\ref{nb3})
yields the forthcoming bound onto the rest mass density, outside of the sonic sphere, 
\begin{equation}
\varrho \le n_\infty \left( \frac{\frac{\Gamma -1}{a^2_\infty}-1}{\frac{\Gamma -1}{\Gamma-1-e}-1}\right)^{\frac{1}{\Gamma-1}} \left( 1+ \frac{\Gamma-1-e}{\Gamma e} \right).
\label{nb4}
\end{equation}
This is an important inequality, that will be used later on.
 
To shorten the notation, we define
\begin{equation}
    \eta \equiv \left[ 1+\frac{a^2_\infty}{\Gamma\left( \Gamma-1-a^2_\infty\right)} \right]
\end{equation}
and
\begin{equation}
\Delta\equiv \left( \frac{\frac{\Gamma -1}{a^2_\infty}-1}{\frac{\Gamma -1}{\Gamma-1-e }-1}\right)^{\frac{1}{\Gamma-1}}  \left( 1+ \frac{\Gamma-1-e}{\Gamma e} \right).
\label{nb5}
\end{equation}
Thus Eqs.\ (\ref{nb3b}) and (\ref{nb4}) can be written as $\rho_\infty = n_\infty \eta$ and $\varrho \le n_\infty \Delta$, respectively.
 
 \subsection{A bound on $k$}

\begin{lem}
\begin{enumerate}[i)]
\item Assume an accreting selfgravitating gas in a spherically symmetric spacetime   with $   \Gamma \in (1, 2]$ and  asymptotic data $n_\infty$, $R_\infty$, $m$, and $a_\infty$. Assume that $n$ is a nonincreasing function of the radius $R$.
\item Assume $R_\infty \gg m $.
\item Let there exists a sonic point at a radius $R_\ast $ and mass $m_\ast$, and $a^2_\ast < \Gamma -1-e$.
\end{enumerate}
Then:
\begin{enumerate}[i)]
\item $\frac{2m_\ast}{R_\ast} < \frac{4 (\Gamma - 1 - e)}{3\Gamma - 2 - 3e}$;
\item
\begin{eqnarray}
k & \ge & \inf \left\{ \sqrt{\frac{-\Gamma+2+e}{3\Gamma-2-3e}},\sqrt{0.2}, \right. \nonumber \\
& & \left. \left( 1-\frac{2m_\ast}{R_\ast} -\frac{37.5m^2\left( m-m_\ast\right)}{R_\infty^3 }\frac{ \Delta}{\eta }\right)^{1/2} \right\}.
\label{boundkn}
\end{eqnarray}
\end{enumerate}
If in addition 
\begin{equation}
37.5\left(\frac{m}{R_\infty }\right)^{3}\frac{\Delta}{\eta }\ll a_\infty^2,
\label{lemma6}
\end{equation}
then 
\begin{equation}
k \ge \inf \left\{ \sqrt{\frac{-\Gamma+2+e}{3\Gamma-2-3e}}, \left( 0.2   \right)^{1/2} \right\}.
\label{boundkkn}
\end{equation}
\end{lem}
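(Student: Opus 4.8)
The plan is to reproduce the architecture of the proof of Lemma 2, with the crude density bound $\varrho \le \varrho_\infty a_\infty^{-2/(\Gamma-1)}$ replaced by the sharper, equation-of-state-adapted bound (\ref{nb4}) available in the $p=Kn^\Gamma$ case. I would dispose of item i) first. Starting from the sonic-point identity (\ref{lemma1}), written as $m_\ast/(2R_\ast) = [a_\ast^2 - c_\ast(1-a_\ast^2)]/(1+3a_\ast^2)$, and using $c_\ast = 2\pi R_\ast^2 p_\ast > 0$ together with $a_\ast^2 < 1$ (Lemma 1), one reads off $m_\ast/(2R_\ast) < a_\ast^2/(1+3a_\ast^2)$. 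Since $x \mapsto x/(1+3x)$ is increasing, feeding in the hypothesis $a_\ast^2 < \Gamma - 1 - e$ of item iii) gives $m_\ast/(2R_\ast) < (\Gamma-1-e)/(3\Gamma-2-3e)$, which is item i) after multiplying by $4$. The same inequality rearranges to $1 - 2m_\ast/R_\ast > (-\Gamma+2+e)/(3\Gamma-2-3e)$, the origin of the first entry of the infimum (\ref{boundkn}): it is just the value of $k = \sqrt{1 - 2m_\ast/R_\ast + U^2} \ge \sqrt{1 - 2m_\ast/R_\ast}$ at the sonic sphere.

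For item ii) I would estimate $k(R) = \sqrt{1 - 2m(R)/R + U^2} \ge \sqrt{1 - 2m(R)/R}$ and control $m(R) = m_\ast + 4\pi \int_{R_\ast}^R \varrho\, r^2 dr$. The essential input is the rest-mass-density bound (\ref{nb4}); it carries over to the self-gravitating flow because $a^2 \le a_\ast^2 < \Gamma-1-e$ for all $R \ge R_\ast$ (monotonicity of $n$, hence of $a$ through (\ref{nb1})) and the chain (\ref{nb1})--(\ref{nb3}) is a purely algebraic consequence of the equation of state, not of the vacuum identity $N=k$ used in Lemma 3. Rewriting (\ref{nb4}) with $\varrho_\infty = n_\infty \eta$ turns it into $\varrho \le \varrho_\infty \Delta/\eta$. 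Combining this with the volume inequality (\ref{obvious}) gives, exactly as in (\ref{boundrho2}), $m(R) \le m_\ast + (m-m_\ast)(\Delta/\eta)(R^3-R_\ast^3)/(R_\infty^3-R_\ast^3)$. I would then split at $R = 5m/2$. For $R \ge 5m/2$ the crude bound $m(R) \le m$ yields $k \ge \sqrt{1 - 2m/(5m/2)} = \sqrt{0.2}$, the second entry. For $R_\ast \le R < 5m/2$ I would approximate $R^3 - R_\ast^3 \le 3(R-R_\ast)R^2 \le 3(R-R_\ast)(5m/2)^2$, discard $(R-R_\ast)/R < 1$ and $R_\infty^3 - R_\ast^3 < R_\infty^3$, and bound $2m_\ast/R \le 2m_\ast/R_\ast$, mimicking (\ref{boundk1}); the constant then assembles as $2\cdot 3\cdot (5/2)^2 = 37.5$, reproducing the third entry of (\ref{boundkn}).

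Finally, to reach (\ref{boundkkn}) I would impose (\ref{lemma6}). Since $m - m_\ast < m$, the subtracted term satisfies $37.5\, m^2(m-m_\ast)R_\infty^{-3}\Delta/\eta < 37.5\,(m/R_\infty)^3 \Delta/\eta \ll a_\infty^2$; as in the step from (\ref{boundk1}) to (\ref{boundk2})---and recalling that $m_\ast/R_\ast$ is never smaller than $a_\infty^2$---this term is negligible against $1 - 2m_\ast/R_\ast$, so the third entry collapses to $\sqrt{1 - 2m_\ast/R_\ast}$, which by item i) is $\ge \sqrt{(-\Gamma+2+e)/(3\Gamma-2-3e)}$, the first entry. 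The infimum therefore reduces to $\inf\{\sqrt{(-\Gamma+2+e)/(3\Gamma-2-3e)}, \sqrt{0.2}\}$, as claimed. I expect the principal difficulty to be bookkeeping rather than conceptual: one must verify that (\ref{nb4}) really transfers to the backreacting case and that the factor $\Delta/\eta$ from (\ref{nb5}) stays finite over the whole range $\Gamma \in (1,2]$, which hinges on $e > 0$, i.e. $a_\infty^2 < \Gamma - 1$, so that the denominators in $\Delta$ and in $e$ do not degenerate; tracking the geometric region decomposition so that the numerical constant comes out as exactly $37.5$ is the other place where care is needed.
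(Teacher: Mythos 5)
Your proposal is correct and follows essentially the same route as the paper's own proof: item i) from the sonic-point relation (\ref{lemma1}) with $c_\ast>0$ and monotonicity of $x\mapsto x/(1+3x)$, item ii) from the mass-function bound built on $\varrho\le n_\infty\Delta=\varrho_\infty\Delta/\eta$ together with (\ref{obvious}), the split at $R=2.5m$ producing the constant $2\cdot 3\cdot(5/2)^2=37.5$ and the crude bound $m(R)\le m$ giving $\sqrt{0.2}$, and the final reduction to (\ref{boundkkn}) under (\ref{lemma6}). Your explicit remark that the density estimate (\ref{nb4}) transfers to the selfgravitating flow because hypothesis iii) supplies $a_\ast^2<\Gamma-1-e$ and the chain (\ref{nb1})--(\ref{nb3}) is pure equation-of-state algebra is exactly the point the paper uses implicitly, so no gap remains.
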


\begin{proof}
Concerning statement i), we refer to relation (\ref{lemma1}), which implies---taking into account condition iii)---that  at a sonic point
\begin{equation}
\frac{m_\ast}{2R_\ast}\le \frac{a^2_\ast}{1+3a^2_\ast}.
\label{i1}
\end{equation}
 Replacing $a^2_\ast $ by $\Gamma-1-e$  (we  again use assumption iii)), allows us to conclude that 
\begin{equation}
\frac{2m_\ast}{R_\ast } < 4 \frac{\Gamma-1-e}{3\Gamma -2-3e}.
\label{i2}
\end{equation}  

In what follows we deal with part ii) of the lemma. The mass function $m(R)$ can be written as
\begin{equation}
m(R) = m_\ast + 4 \pi \int_{R_\ast}^R \varrho r^2 dr.
\end{equation}
The asymptotic mass density $\varrho_\infty$ is given by Eq.\ (\ref{nb3b}); it can be approximated by the baryonic mass density, if $a_\infty \ll 1$.
 Notice that from (\ref{nb4}) and (\ref{nb5}) we have $\varrho \le n_\infty \Delta$. Thus one obtains, using (\ref{nb4}) and the obvious inequality (\ref{obvious}) (we assume that $n$ is a nonincreasing function of the radius $R$), the following estimate:
\begin{equation}
m(R)\le  m(R_\ast)+\frac{\left(m-m_\ast\right)}{\left(R_\infty^3-R_\ast^3\right)}\left( R^3-R_\ast^3\right)  
\frac{ \Delta}{\eta }.
\label{nb6}
\end{equation}
 
Let us consider first the case with $R_\ast < 2.5m$. We shall estimate the mass function in the interval $(R_\ast, 2.5m)$. Approximating $R^3-R_\ast^3=(R-R_\ast)(R^2+RR_\ast+R_\ast^2)\le 3(R-R_\ast)R^2$, we get
\begin{equation}
m(R)\le m(R_\ast)+\frac{\left(m-m_\ast\right) \left( R-R_\ast\right) }{ R_\infty^3 }3\left( 2.5m  \right)^2\frac{\Delta}{\eta}.
\label{nb7}
\end{equation}
 This implies the following bound onto the function $k(R)$:
\begin{eqnarray}
k(R)&=&\sqrt{1-2\frac{m(R)}{R}+U^2}\ge\nonumber\\ &&\sqrt{1-2\frac{m_\ast}{R_\ast}-\frac{\left(m-m_\ast\right)   }{ R_\infty^3 }  37.5m^2 \frac{\Delta}{\eta}}.
\label{nb8}
\end{eqnarray}
Assume that the second term is much smaller than $a^2_\infty$; this implies   (beware that $m_\ast/R_\ast$ is not smaller than $a^2_\infty $)
\begin{equation}
37.5  \left(\frac{ m}{R_\infty }\right)^{ 3   } \frac{\Delta}{\eta} \ll a_\infty^2.
\label{nb9}
\end{equation} 
In such a case one obtains
\begin{equation}
k(R)\ge \sqrt{1-2\frac{m_\ast}{R_\ast} }\ge \sqrt{\frac{-\Gamma+2+e}{3\Gamma-2-3e}} .
\label{nb10}
\end{equation}
In the  remaining case, for $R \ge 2.5$, and also when the sonic radius $R_\ast$ exceeds $2.5m$, we get for $R \ge R_\ast$,
\begin{equation}
k(R)\ge \sqrt{0.2} 
\label{nb11}
\end{equation}
---the second term in bound (\ref{boundkkn}). This accomplishes the proof of Lemma 6.
\end{proof}

\begin{remark}
The expression on the rhs of Eq.\ (\ref{i2}) achieves its maximal value (equal to $1$) at $\Gamma =2$ and $e=0$; in this limiting case inequality (\ref{nb10}) becomes just $k(R)\ge 0$. Sonic points might be  close to the apparent horizon---which is the sphere with $k(R)=0$---but do not lie on it, if $a^2_\infty >0$. We can separate further the sonic and apparent horizons, by assuming a bound onto the asymptotic speed of sound, if exponents $\Gamma$ are close to 2. Put, for instance $a^2_\infty <5/13$ for $\Gamma \ge 18/13$. Then    
$4\frac{\Gamma-1-e}{3\Gamma -2-3e}< 0.9$ for all $\Gamma $'s. Thus  
$R_\ast >2m_\ast/0.9$ and the minimal value of $k(R_\ast)$ might be achieved 
for the polytropic index $\Gamma =2$. In such a case we have $k(R_\ast)>  \sqrt{0.1}$.
\end{remark}

\subsection{Main result}

Lemma 6 guarantees that sonic horizons exist outside apparent horizons if $a^2_\infty >0$---thus $k(R)$ is strictly positive, as explained in the preceding remark. This is needed in the forthcoming calculation.

Define $k_\mathrm{inf} \equiv \inf \left\{ \sqrt{\frac{-\Gamma+2+e}{3\Gamma-2-3e}} , \left( 0.2 \right)^{1/2}   \right\}$.

\begin{thm}
Assume that the density $n$ is a nonincreasing function of $R$ and the following conditions hold:
\begin{enumerate}[i)]
\item $\Gamma \in (1, 2]$, $a^2_\ast \le \Gamma -1-e$; the asymptotic data are
 $\varrho_\infty$, $R_\infty$, $m$, and $a_\infty$;
\item $3 \frac{\Delta}{\eta k_\mathrm{inf}^2}\frac{m}{R_\infty} \ll 1$;
\item $37.5\left(\frac{m}{R_\infty }\right)^{3}\frac{ \Delta}{\eta }\ll a_\infty^2$.
\end{enumerate}
Then the sonic point parameters $a^2_\ast$, $U^2_\ast$, and $m_\ast/R_\ast$ in the above model with backreaction are essentially the same as in the test fluid accretion with the same asymptotic data.
\end{thm}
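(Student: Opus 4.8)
The plan is to mirror the argument used for Theorem~3, replacing the crude density bound (\ref{boundrho1}) by the sharper estimates of Lemma~5 and the lower bound on $k$ of Lemma~2 by that of Lemma~6. The target is again twofold: to show that the exponential correction in the lapse formula (\ref{lapse}) is negligible at the sonic point, so that $N(R_\ast)\approx k(R_\ast)$, and to show that $c_\ast=2\pi R_\ast^2 p_\ast$ can be set to zero. Once both hold, the sonic-point relations collapse to a single algebraic equation in $a_\ast^2$ whose coefficients involve only $\Gamma$ and $a_\infty$, which forces $a_\ast^2$, $U_\ast^2$ and $m_\ast/R_\ast$ to coincide with their test-fluid values.

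First I would invoke Lemma~6: condition~iii) of the theorem is exactly hypothesis (\ref{lemma6}), so the sharp bound (\ref{boundkkn}) applies and $k\ge k_\mathrm{inf}>0$ for all $R\ge R_\ast$, strict positivity being guaranteed once $a_\infty^2>0$, as in the remark following Lemma~6. With this uniform lower bound in hand I would estimate the integral in the exponent of (\ref{lapse}). Using the general bound $a^2\le\Gamma-1$ for the $p=Kn^\Gamma$ fluid one has $p\le\varrho$; combining this with the rest-mass-density estimate $\varrho\le n_\infty\Delta$ from (\ref{nb4})--(\ref{nb5}), the relation $\varrho_\infty=n_\infty\eta$ from (\ref{nb3b}), and $\varrho_\infty\le 3m/(4\pi R_\infty^3)$, the chain
\begin{equation}
4\pi\int_{R_\ast}^{R_\infty}(p+\varrho)\frac{s}{k^2}\,ds\le\frac{8\pi}{k_\mathrm{inf}^2}n_\infty\Delta\int_{R_\ast}^{R_\infty}s\,ds\le 3\,\frac{\Delta}{\eta\,k_\mathrm{inf}^2}\,\frac{m}{R_\infty}
\end{equation}
reduces the exponent to the quantity appearing in condition~ii), which is $\ll 1$. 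Hence the exponential in (\ref{lapse}) is $\approx 1$ and $N(R_\ast)\approx k(R_\ast)=\sqrt{1-2m_\ast/R_\ast+U_\ast^2}$.

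Next I would dispose of $c_\ast$. Writing $p=a^2\varrho/(\Gamma-a^2)\le a^2\varrho$, again by $a^2\le\Gamma-1$, and using the same density bound together with $R_\ast<R_\infty$ gives $c_\ast\le \tfrac{3}{2}\,a_\ast^2\,(\Delta/\eta)\,(m/R_\infty)\ll a_\ast^2$ by condition~ii), so $c_\ast$ may be dropped from every sonic-point relation. At the sonic point $U_\ast^2=m_\ast/(2R_\ast)$, whence $k_\ast=\sqrt{1-3m_\ast/(2R_\ast)}$, and from the analogue of (\ref{n2}) one has $1-3m_\ast/(2R_\ast)=1/(1+3a_\ast^2)$. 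Feeding $N_\ast\approx k_\ast$ into the Bernoulli relation (\ref{n1}) for this equation of state then yields
\begin{equation}
\frac{1}{\sqrt{1+3a_\ast^2}}=\frac{\Gamma-1-a_\ast^2}{\Gamma-1-a_\infty^2},
\end{equation}
an equation whose coefficients depend only on $\Gamma$ and $a_\infty$. Its root $a_\ast^2$ is therefore independent of the density normalization $n_\infty$, equivalently $\varrho_\infty$, and so are $m_\ast/R_\ast$ and $U_\ast^2$; these are exactly the test-fluid values obtained in the proof of Lemma~5.

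The hard part will be the density-estimate bookkeeping rather than any new idea. The worst-case factor $\Delta$ in (\ref{nb5}) grows as $e\to 0$, and more generally near the endpoints of the admissible $\Gamma$-range, so one must verify that conditions~ii) and~iii), which carry precisely the factors $\Delta/(\eta k_\mathrm{inf}^2)$ and $\Delta/\eta$, still force both the exponent and $c_\ast$ to be small. The genuinely delicate point is ensuring that $k_\mathrm{inf}$ stays bounded away from zero for $\Gamma$ close to $2$, where the bound (\ref{i2}) approaches $1$ and the sonic sphere can approach the apparent horizon; it is here that the strict positivity $a_\infty^2>0$, and any supplementary restriction on $a_\infty^2$ of the type noted in the remark after Lemma~6, must be invoked to keep $k_\mathrm{inf}$ separated from zero and the whole estimate meaningful.
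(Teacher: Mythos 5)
Your proposal is correct and follows essentially the same route as the paper's proof: Lemma~6 (via condition iii) gives the lower bound $k\ge k_\mathrm{inf}$, condition ii) then makes the exponent in the lapse formula negligible so $N_\ast\approx k_\ast=1/\sqrt{1+3a_\ast^2}$, the bound $c_\ast\ll a_\ast^2$ lets one drop the pressure term, and inserting this into the Bernoulli relation (\ref{n1}) yields the cubic sonic-point equation whose coefficients involve only $\Gamma$ and $a_\infty$, forcing $a_\ast^2$, $U_\ast^2$, $m_\ast/R_\ast$ to take their test-fluid values. Your $c_\ast$ estimate via $p\le a^2\varrho$ is a harmless (in fact slightly cleaner) variant of the paper's bound, which is written in terms of $n_\ast$ and carries an extra $1/e$ factor.
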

   
\begin{proof}
Conditions ii) and iii) allow us to use estimate (\ref{boundkkn}) of Lemma 6,
\begin{equation}
k \ge  k_\mathrm{inf}.
\label{boundkkb}
\end{equation}
Statement i) of Lemma 6 gives us, in the worst scenario,   $k_\mathrm{inf}=\sqrt{0.2}$.

Using this, employing $p \le \varrho$ and estimate (\ref{boundrho1}), we obtain a chain of inequalities 
\begin{eqnarray}
4\pi\int_{R_\ast}^{R_\infty} (  p + \varrho) \frac{s}{k^2 } ds & \le & \nonumber \\
\frac{4\pi}{k^2_{\inf}} \int_{R_\ast}^{R_\infty} (  p + \varrho) s ds & \le & \nonumber\\
\frac{8\pi}{k^2_{\inf}}   \int_{R_\ast}^{R_\infty}  n_\infty \Delta s ds & \le & \nonumber \\
3 \frac{\Delta }{k^2_{\inf}\eta} \frac{m - m(R_\ast)}{R_\infty} & \ll & 1. \label{e11}
\end{eqnarray}
In order to get the last but one inequality, we exploited the bound $\frac{4\pi}{3} \varrho_\infty \left(R_\infty^3-R_\ast^3\right)\le m-m_\ast$.

Using $p= \frac{a^2(\Gamma -1)}{\Gamma(\Gamma -1 - a^2)}n$, $a^2_\ast \le \Gamma -1-e$, and (\ref{obvious}), one gets
\begin{eqnarray}
c_\ast & = & 2 \pi R^2_\ast \frac{a^2_\ast(\Gamma -1)}{\Gamma(\Gamma -1 -a^2_\ast)}n_\ast  \le \nonumber\\
&&  a^2_\ast \frac{\Delta}{e\eta} \frac{3m}{4\pi  R_\infty} \frac{R_\ast^2}{R_\infty^2} \ll a^2_\ast .
\label{ne11}
\end{eqnarray}
That implies that $c_\ast$ can be put to zero in all relations between characteristics of the sonic point. Inserting estimates (\ref{e11}) and (\ref{ne11}) into the expression for the lapse $N$, we get the estimate
\begin{equation} 
N(R_\ast)\approx \sqrt{1-2\frac{m_\ast}{R_\ast}+U^2_\ast}=\frac{1}{\sqrt{1+3a^2_\ast}}
\label{e2c}
\end{equation}
The above estimate and the Bernoulli-type equation (\ref{n1}) yield now:
\begin{equation}
N_\ast\left( 1-\frac{a^2_\infty}{\Gamma -1}\right)=1-\frac{a^2_\ast}{\Gamma -1}.
\label{estimateb}
\end{equation}
This is a cubic equation for an unknown $a^2_\ast$:  
\begin{eqnarray}
\frac{3a^6_\ast}{(\Gamma-1)^2}+a^4_\ast\frac{-6\Gamma+7}{(\Gamma-1)^2}+a^2_\ast\frac{3\Gamma -5}{\Gamma-1}+ && \nonumber\\
1-\left(1-\frac{a^2_\infty}{\Gamma -1}\right)^2 & =& 0.
\label{nsonic}
\end{eqnarray}
Coefficients of this algebraic equation   depend only on 
the asymptotic speed of sound.  
Its solution has been obtained in \cite{KM2006} for $\Gamma \le 5/3$, with a sign mistake (see also \cite{mandal_ray_das}). The correct formula is given below
\begin{eqnarray}
&&a_\ast^2=\frac{6\Gamma-7}{9  }+ 2\frac{3\Gamma-2}{9} \times \nonumber\\
&& \cos\left[\frac{\pi}{3}+\frac{1}{3}\arccos\left( \frac{G+486(\Gamma-1)a_\infty^2-243a_\infty^4 }{2(3\Gamma-2)^3}\right)\right],
\nonumber\\
\label{slns0}
\end{eqnarray}
where $G\equiv 54\Gamma^3-351\Gamma^2+558\Gamma-259$. Let us point out, that (\ref{slns0}) solves the sonic point equation (\ref{nsonic}) also for $\Gamma \in [5/3, 2]$. 
 
Chaverra et al.\ have found other solutions for $\Gamma \in (5/3,2]$---they are related to a family of homoclinic solutions, with which we do not deal in this paper \cite{CMS}.  
\end{proof}

\section{Numerical analysis}
\label{numerics}

\subsection{On the numerical method}
The following subsections report numerical examples with the polytropic equation of state $p = Kn^\gamma$.

We solve numerically the following set of equations:
\begin{subequations}
\label{ddx}
\begin{eqnarray}
\label{da2dx}
\frac{da^2}{dx} & = & - a^2 \frac{\Gamma - 1 - a^2}{k^2 a^2 - U^2} \nonumber \\
&& \times \left[ \frac{m(x)}{\exp(x)} - 2 U^2 + 4 \pi \exp(2 x) p \right], \\
\frac{d m}{dx} & = & 4 \pi \exp(3 x) \varrho,
\label{dmdx}
\end{eqnarray}
\end{subequations}
where $x = \ln R$ is a new independent variable. Here $n$ is given by Eq.\ (\ref{nb1}), $\varrho$ is computed from Eq.\ (\ref{nb3}), the pressure is given by
\begin{equation}
p = \frac{\Gamma - 1}{\Gamma} \frac{n a^2}{\Gamma - 1 - a^2},
\end{equation}
and $U^2$ is computed as
\begin{equation}
U^2 = \left( \frac{\dot m_\mathrm{B}}{4 \pi R^2 n} \right)^2.
\end{equation}
The curvature $k$ is given by Eq.\ (\ref{p}).

We assume the parameters: $m$, $R_\infty$, $a_\infty$, $\Gamma$. Equations (\ref{ddx}) are integrated starting from $x_\infty = \ln R_\infty$, assuming boundary conditions $a^2(x_\infty) = a^2_\infty$ and $m(x_\infty) = m$. We use a fairly standard Runge-Kutta method of 8-th order \cite{heirer}. The value of $\dot m_\mathrm{B}$ is adjusted (with a suitable bisection method) to yield a transonic solution. Finding a solution passing through a sonic point requires handling of the $0/0$ value appearing on the right-hand side of Eq.\ (\ref{da2dx}) at the sonic point. In practice, we approximate at the sonic point the right-hand side of Eq.\ (\ref{da2dx}) with its value from a preceding Runge-Kutta step. Apart from checking the sonic-point conditions, we also check for an occurrence of the apparent horizon. This yields the black hole mass $m_\mathrm{BH} = m(x_\mathrm{BH})$, where $x_\mathrm{BH}$ is the coordinate corresponding to the apparent horizon.

Equation (\ref{da2dx}) follows from equation (\ref{N/k}).
The calculation is rather lengthy, but elementary. One has to replace the lapse $N$ using Eq.\ (\ref{bernoulligeneral}) and use the relation
\begin{equation}
    \frac{\varrho + p}{n} = \frac{\Gamma - 1}{\Gamma - 1 -a^2}.
\end{equation}
Next, the fact that $R^2 U n = \mathrm{const}$ yields
\begin{equation}
    \frac{dU^2}{dR} = -\frac{4 U^2}{R} - \frac{2 U^2}{n} \frac{dn}{dR}.
\end{equation}
After some algebra, one arrives at
\begin{eqnarray}
\frac{1}{a^2} \frac{d a^2}{dR} & = &  - \frac{\Gamma - 1 - a^2}{(k^2 a^2 - U^2) R} \nonumber \\
&& \times \left[ \frac{m(R)}{R} - 2 U^2 + 4 \pi R^2 p \right].
\end{eqnarray}
It remains to change the independent variable from $R$ to $x$ to get Eq.\ (\ref{da2dx}). Equation (\ref{dmdx}) follows directly from Eq.\ (\ref{masseq}).

\subsection{How good are analytic estimates? The case of $\Gamma <5/3$.}

It seems to be  easier to satisfy  analytical restrictions of part iii) of Theorem 7 than those of part ii). This is illustrated by two diagrams in Figs.\ \ref{fig:ii} and \ref{fig:iii}, in which the asymptotic speed of sound $a^2_\infty =0.01$. In both cases we have the polytropic index $\Gamma $ as the abscissa. The ordinate in Fig.\ \ref{fig:ii} is equal to
\begin{equation}
I_3 (\Gamma) \equiv \left(37.5 \frac{\Delta}{a_\infty^2 \eta }\right)^{1/3};
\end{equation}
Assumption iii) of Theorem 7 demands now that $R_\infty \gg I_3 m$.

\begin{figure}
\centering
\includegraphics[width=1\columnwidth]{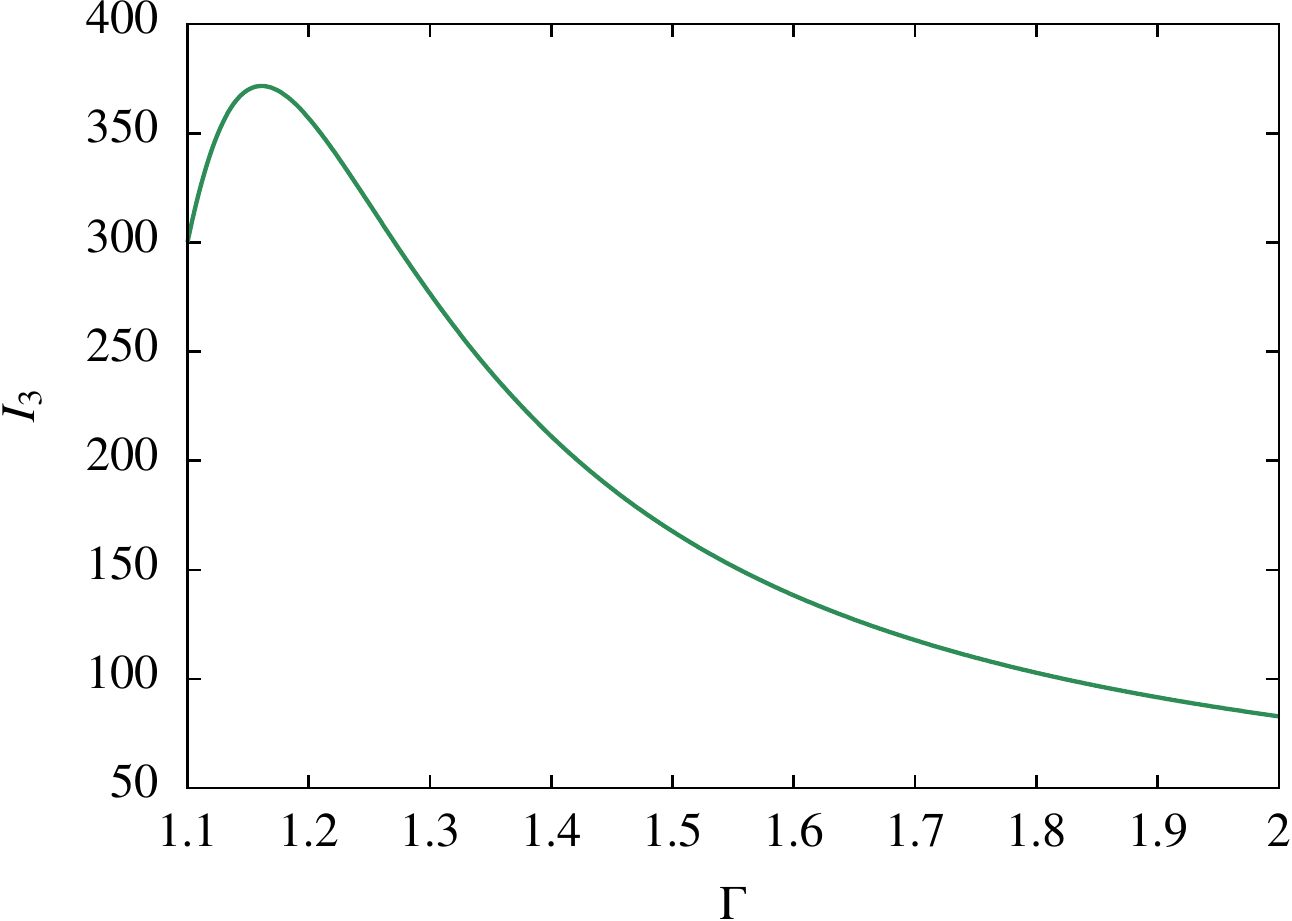}
\caption{The ordinate is $I_3(\Gamma )$, and $\Gamma \in [1.1,2]$ is the abscissa. Here  $a^2_\infty =10^{-2}$.}
\label{fig:ii}
\end{figure}  

The ordinate in Fig.\ \ref{fig:iii} is equal to
\begin{equation}
I_{2}(\Gamma) \equiv 3 \frac{\Delta\left(3\Gamma -2-3e\right)}{\eta (-\Gamma+2+e) };
\end{equation} 
we decided to choose here $k_\mathrm{inf}=\sqrt{\frac{-\Gamma+2+e}{3\Gamma -2-3e}}$. This yields somewhat smaller values of $I_2$ for $\Gamma <1.6$ than the alternative choice $k_\mathrm{inf}=\sqrt{0.2}$, but it is more accurate for $\Gamma \approx 2$, which is the region of interest because of possible astrophysical applications.
Assumption ii) of Theorem 7 demands now that $R_\infty \gg I_2 m$. It is clear from the inspection of the two diagrams that  $I_2$ is much bigger than $I_3$; Thus if $R_\infty \gg I_2m$, then also $R_\infty \gg I_3m$.

\begin{figure}
\centering
\includegraphics[width=1\columnwidth]{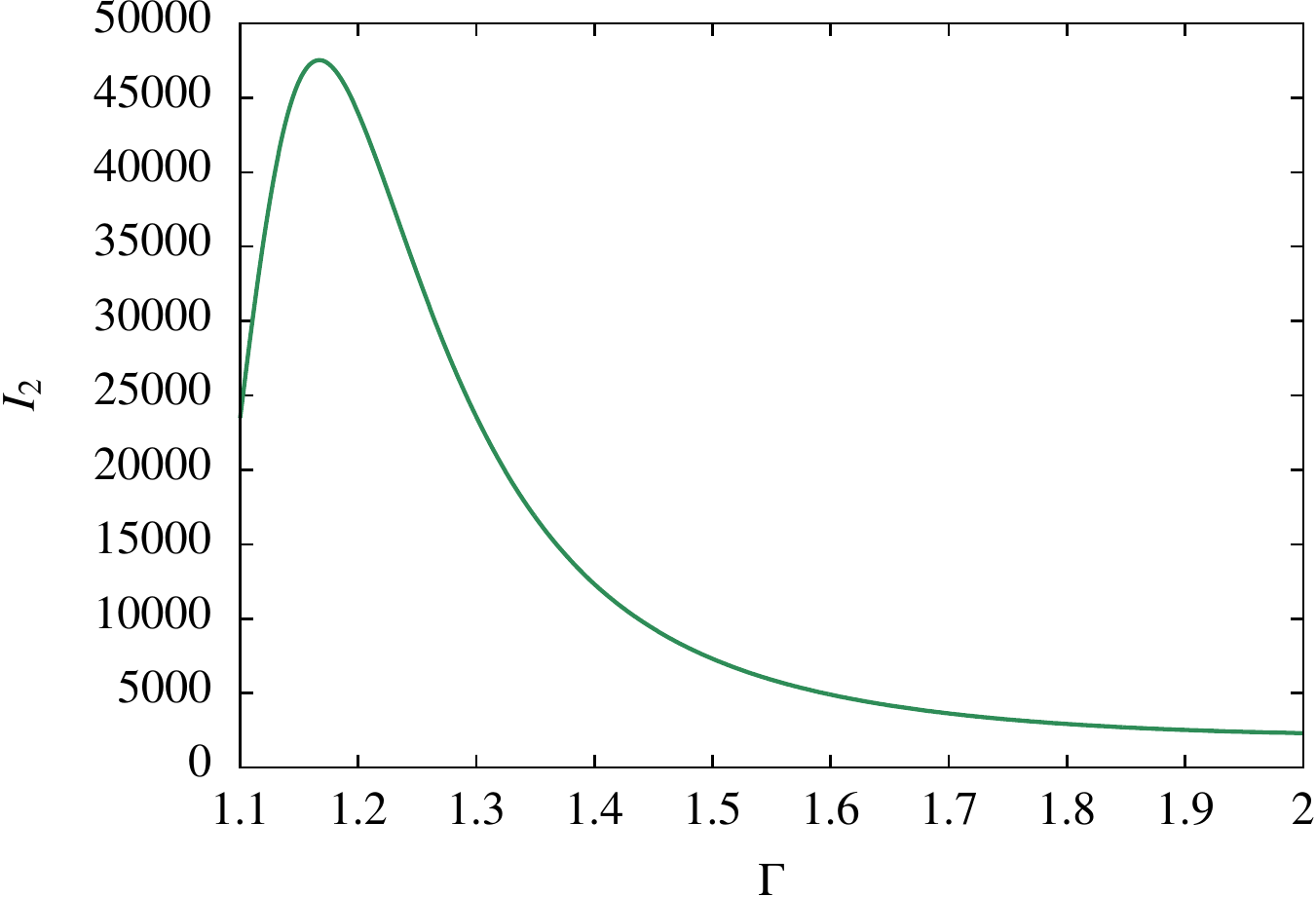}
\caption{The ordinate is $I_2(\Gamma )$, and $\Gamma \in [1.1,2]$ is the abscissa. Here $a^2_\infty =10^{-2}$.}
\label{fig:iii}
\end{figure}

For given boundary data---$R_\infty$, $a^2_\infty$, $\varrho_\infty$, and the total mass $m$---and the two last inequalities being satisfied, Theorem 7 ensures that parameters of sonic points do not depend on $\varrho_\infty$, that is they do not depend on the ratio $m_\mathrm{BH}/m$.

The two forthcoming diagrams (Figs.\ \ref{fig:3} and \ref{fig:4}) illustrate this fact. Some asymptotic data  are the same for the two cases: $R_\infty = 10^6$, $m_\infty = 1$, and $\Gamma = 1.5$. Asymptotic speeds of sound are different. In the case of $a^2_\infty = 4\times 10^{-4}$ we obtain in Fig.\ \ref{fig:3} the numerical speed of sound at the sonic point $a^2_\ast\approx 0.00158$; this is very close to the analytic value $a^2_\ast = 0.00157945$ obtained from (\ref{slns0}). In this case $4100 m \approx I_3 m \ll R_\infty < I_2 m \approx 4.2 \times 10^6m$; thus one of sufficient conditions is not satisfied, but Theorem 7 is still valid---the speed of sound is independent of $m_\mathrm{BH}/m$. In the other example, with $a^2_\infty =  10^{-2}$, we have found (Fig.\ \ref{fig:4}) the numerical speed of sound $a^2_\ast\approx 0.03188$; this is again close to the analytic value $a^2_\ast = 0.0318652$ obtained from (\ref{slns0}). Now $R_\infty \gg I_2 m \approx 7300 m \gg I_3 m \approx 168 m$, and obviously $a^2_\ast$ does not depend on $m_\mathrm{BH}/m$.

Let us remark, that the Newtonian value, known from the Bondi model, is equal to $a^2_\ast = 2 a^2_\infty/(5 - 3\Gamma)$; this gives $a^2_\ast = 0.0016$ for $a^2_\infty = 0.0004$, close to the corresponding numerical result. 

In the next example we have boundary data $R_\infty = 500$, $m = 1$, $a^2_\infty = 10^{-3}$, and $\Gamma = 1.5$. It appears that now $I_2 \approx 6.7\times 10^5$ and $I_3\approx 1640$. Thus both conditions ii) and iii) of Theorem 7 are broken, quite convincingly. Figure \ref{fig:5} shows that the sonic speed of sound depends strongly on the ratio $m_\mathrm{BH}/m$, as it should be expected.

In conclusion: numerical examples---these discussed above, but also a large pool of others---demonstrate, that analytical assumptions of Theorem 7 might be too strong, in particular for small values of the asymptotic speed of sound. Theorem 7 can be true even if some of conditions ii) or iii) are not satisfied. Its predictive power improves with an increase of $a^2_\infty$ and fails quite strongly for very small values of $a^2_\infty $.
 
\begin{figure}
\centering
\includegraphics[width=1\columnwidth]{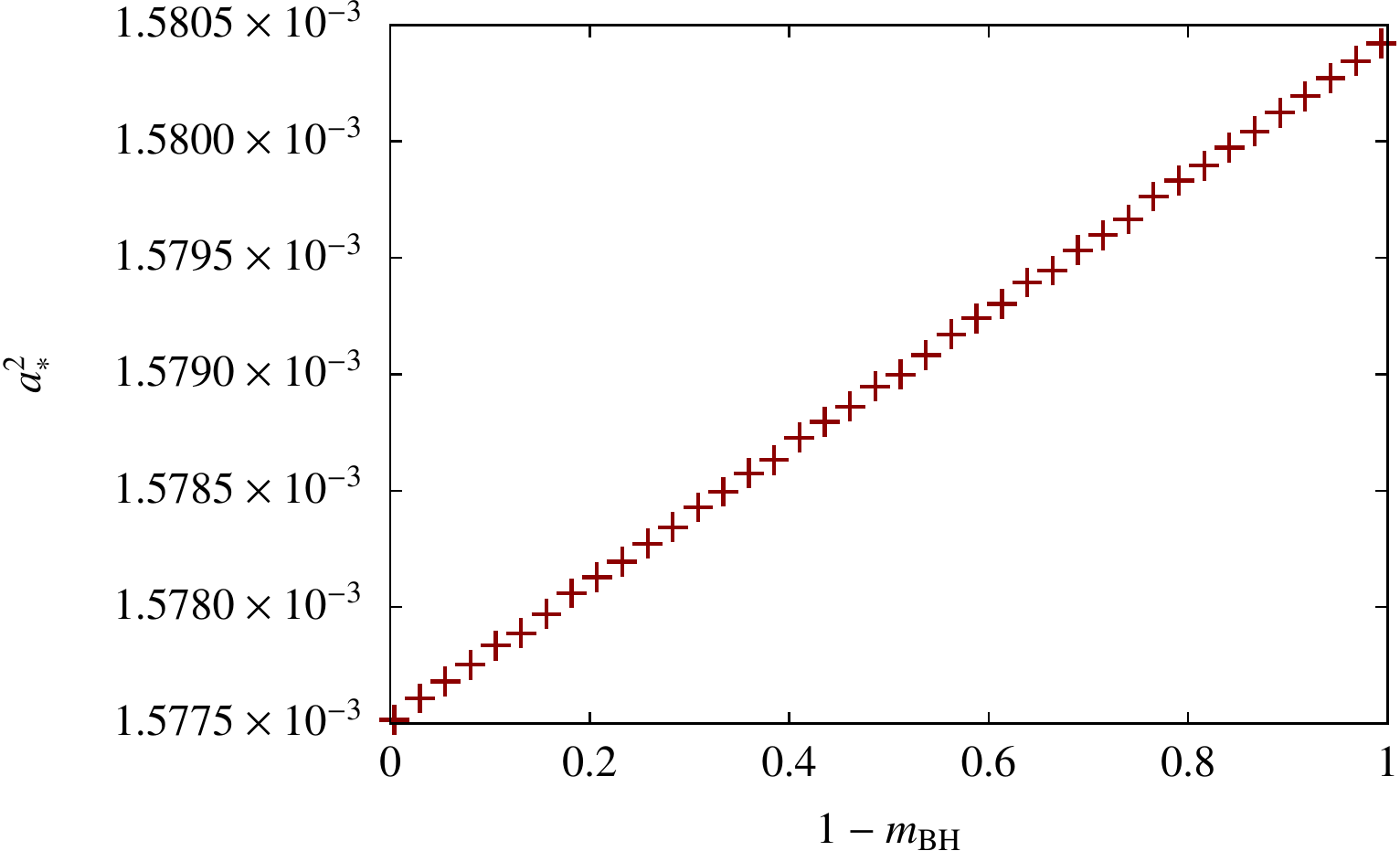}
\caption{$a^2_\ast$ (ordinate)  versus the fluid mass $1 - m_\mathrm{BH}$ (abscissa). The asymptotic parameters are: $m = 1$, $a^2_\infty = 4 \times 10^{-4}$, $\Gamma =1.5$, and $R_\infty =10^6$.}
\label{fig:3}
\end{figure}

\begin{figure}
\centering
\includegraphics[width=1\columnwidth]{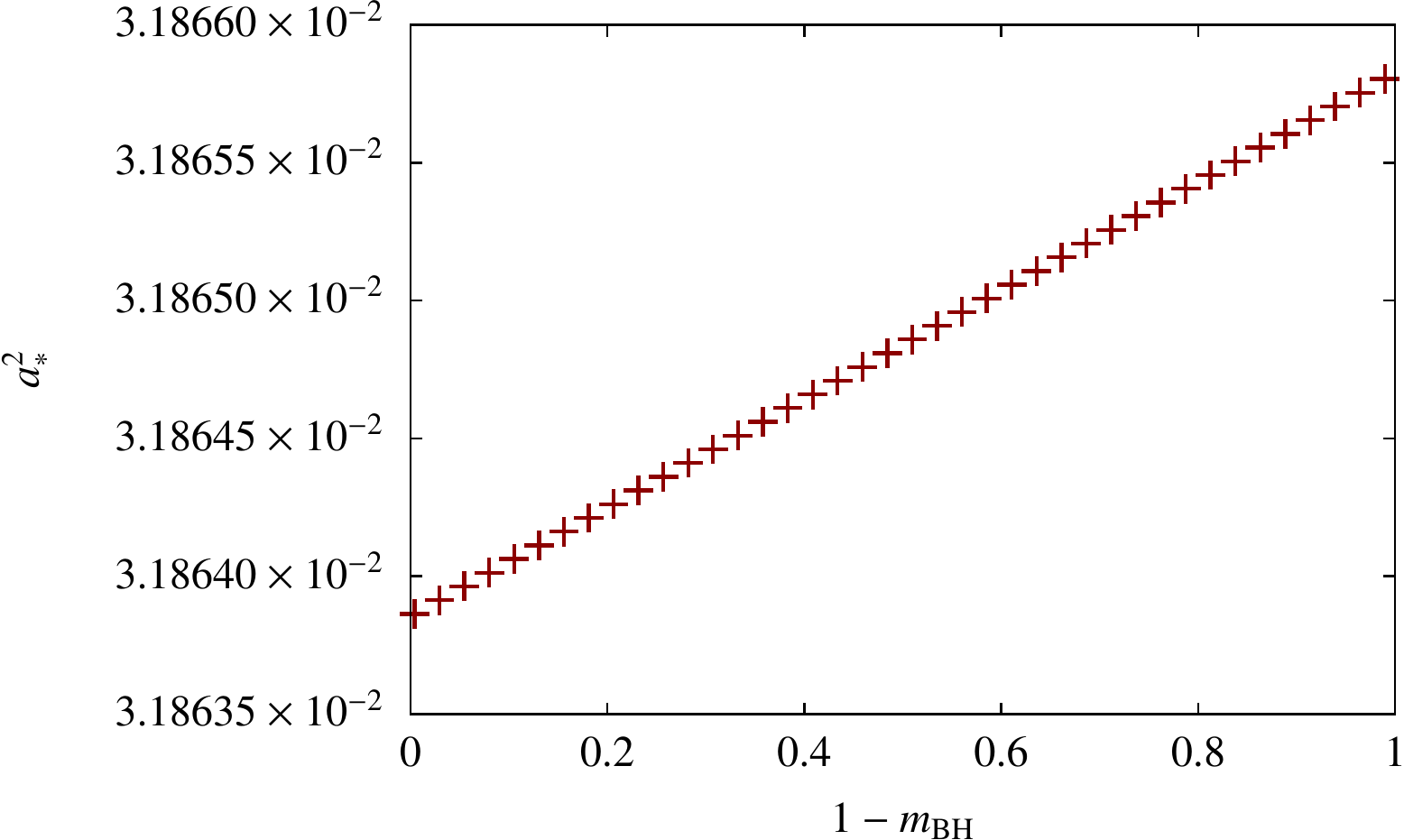}
\caption{$a^2_\ast$ (ordinate) versus the fluid mass $1 - m_\mathrm{BH}$ (abscissa). The asymptotic parameters are: $m=1$, $a^2_\infty = 10^{-2}$, $\Gamma =1.5$, and $R_\infty = 10^6$.}
\label{fig:4}
\end{figure}

\begin{figure}
\centering
\includegraphics[width=1\columnwidth]{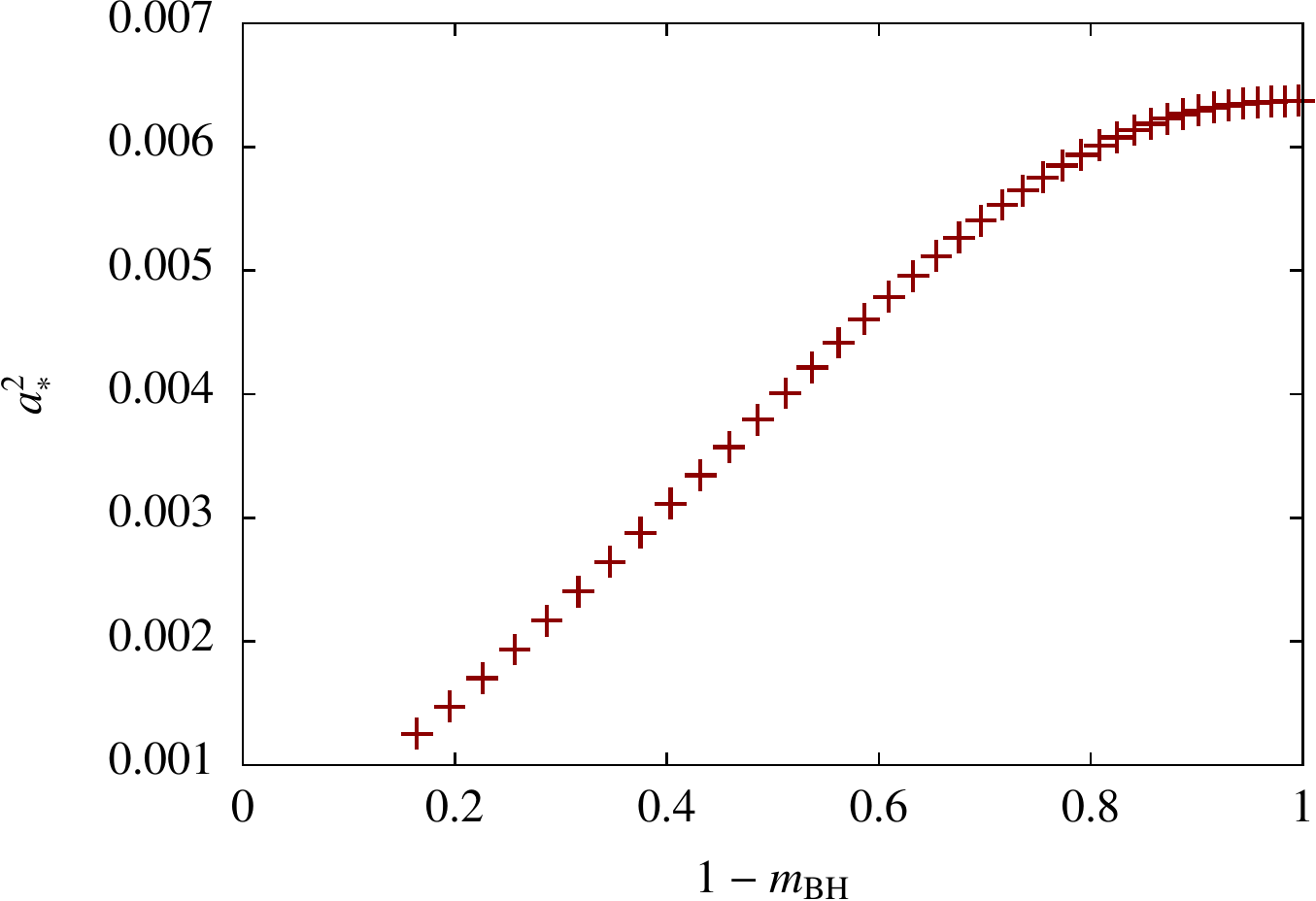}
\caption{$a^2_\ast$ (ordinate) versus the fluid mass $1 - m_\mathrm{BH}$ (abscissa). The asymptotic parameters are: $m = 1$, $a^2_\infty = 10^{-3}$, $\Gamma = 1.5$, $R_\infty = 500$.}
\label{fig:5}
\end{figure}
 
\subsection{Polytropic equation of state with $\Gamma =5/3$}
  
Asymptotic data in the first example (Fig.\ \ref{fig:6}) are $R_\infty = 100$, $a^2_\infty = 10^{-2}$, $m = 1$. We assume the polytropic index $\Gamma = 5/3$. Now  $I_2 m \approx 4000 m$ is much larger than the size $R_\infty = 100$, which in turn is marginally smaller than $I_3 m \approx 124 m$. There is a strong dependence of $a^2_\infty $ on the fluid abundance $1 - m_\mathrm{BH}/m$.

\begin{figure}
\centering
\includegraphics[width=1\columnwidth]{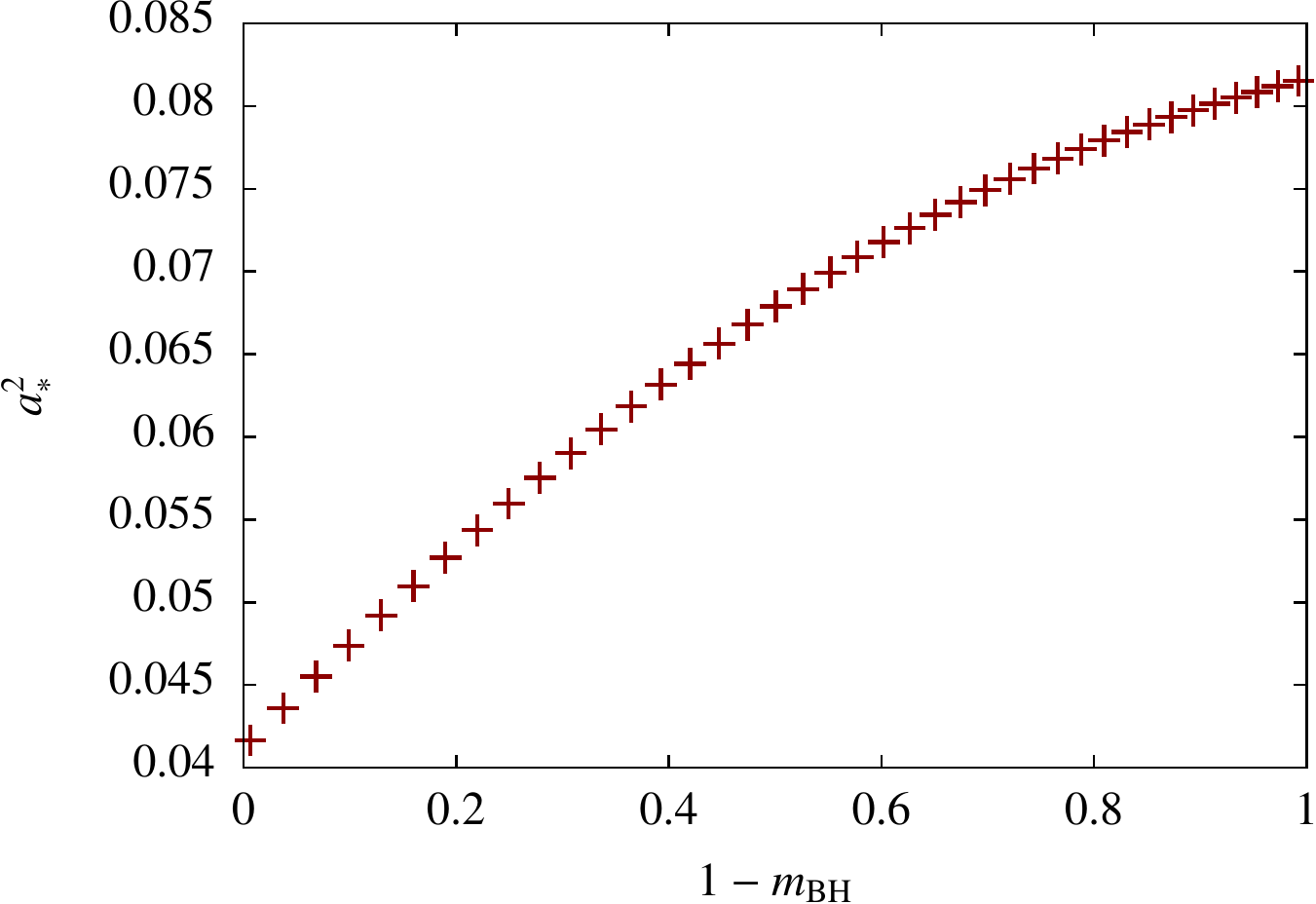}
\caption{$a^2_\ast$ (ordinate) versus the fluid mass $1 - m_\mathrm{BH}$ (abscissa). The asymptotic parameters are: $m=1$, $a^2_\infty = 10^{-2}$, $\Gamma =5/3$, and $R_\infty = 100$.}
\label{fig:6}
\end{figure}

In the next example we consider the same values of the asymptotic speed of sound ($a^2_\infty = 10^{-2}$), but enlarge the size to $R_\infty = 10^4$. Now $  R_\infty >I_2 m \gg I_3m$. It appears (see Fig.\ \ref{fig:7}) that $a_\ast^2$ depends rather weakly on $1-m_\mathrm{BH}/m$.  
 
\begin{figure}
\centering
\includegraphics[width=1\columnwidth]{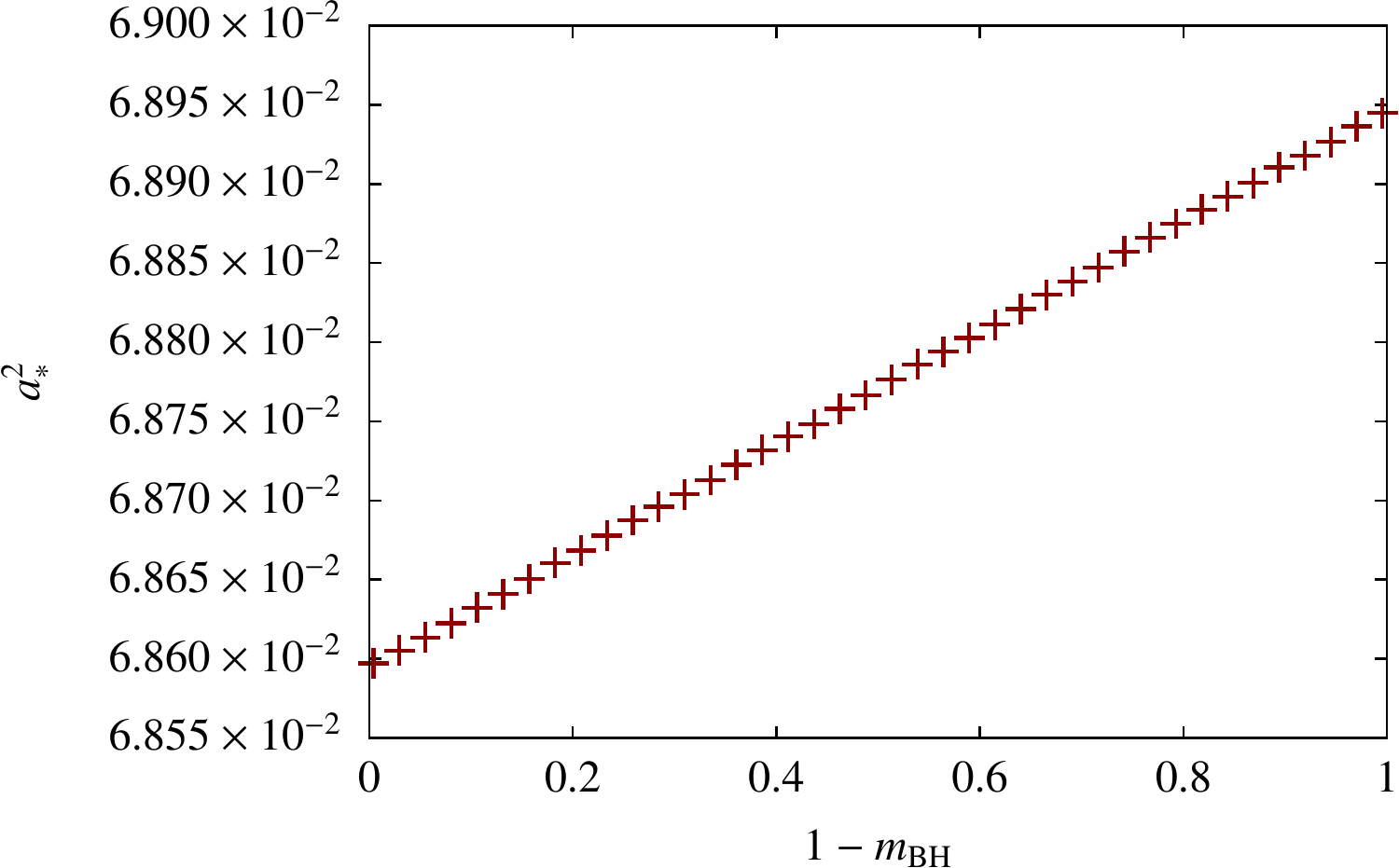}
\caption{$a^2_\ast$ (ordinate) versus the fluid mass $1 - m_\mathrm{BH}$ (abscissa). The asymptotic parameter are: $m=1$, $a^2_\infty = 10^{-2}$, $\Gamma =5/3$, and $R_\infty = 10^4$.}
\label{fig:7}
\end{figure}

The increase of the size $R_\infty$ by another factor of 100 yields a picture with essentially constant sonic speed $a^2_\ast \approx 0.068 $ (see Fig.\ \ref{fig:8}). For comparison, the analytic value obtained from (\ref{slns0}) is 
$a^2_\ast = 0.068827$. Notice that now $R_\infty $ is now much larger than $I_2 m$.

\begin{figure}
\centering
\includegraphics[width=1\columnwidth]{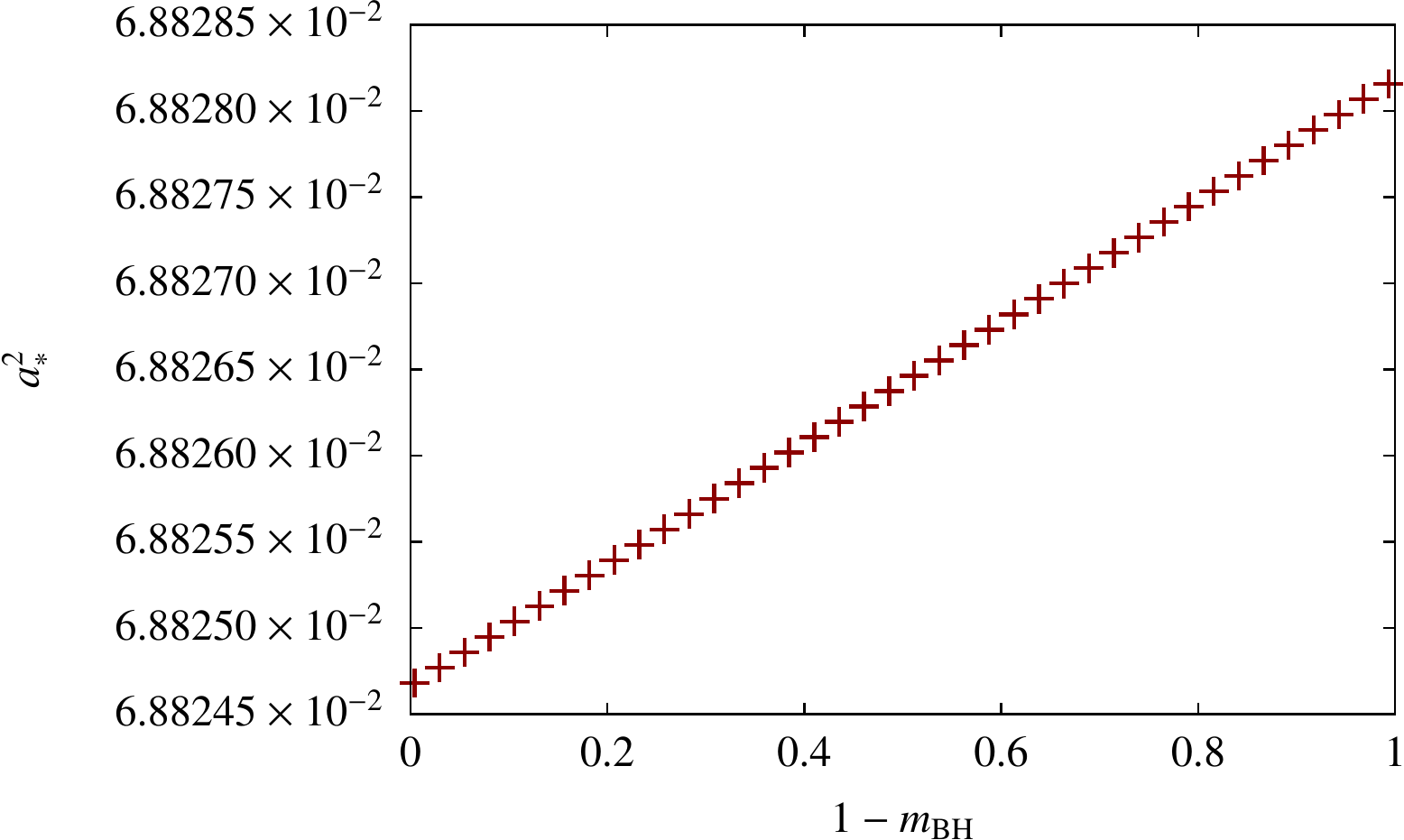}
\caption{$a^2_\ast$ (ordinate) versus the fluid mass $1 - m_\mathrm{BH}$ (abscissa). The asymptotic parameters are: $m = 1$, $a^2_\infty = 10^{-2}$, $\Gamma = 5/3$, and $R_\infty = 10^6$.}
\label{fig:8}
\end{figure}

We shall also investigate the influence of the asymptotic speed of sound. Thus we take as before $R_\infty = 10^4$ but 10 times smaller asymptotic speed of sound, $a^2_\infty = 10^{-4}$. Now $I_2 m \approx 3.8 \times 10^6 m$ is much larger than the size $R_\infty = 10000$, which in turn is larger than $I_3 m \approx 5700 m$. There is a clear dependence (see Fig.\ \ref{fig:9}) of $a^2_\ast $ on the fluid mass fraction within the system.

\begin{figure}
\centering
\includegraphics[width=1\columnwidth]{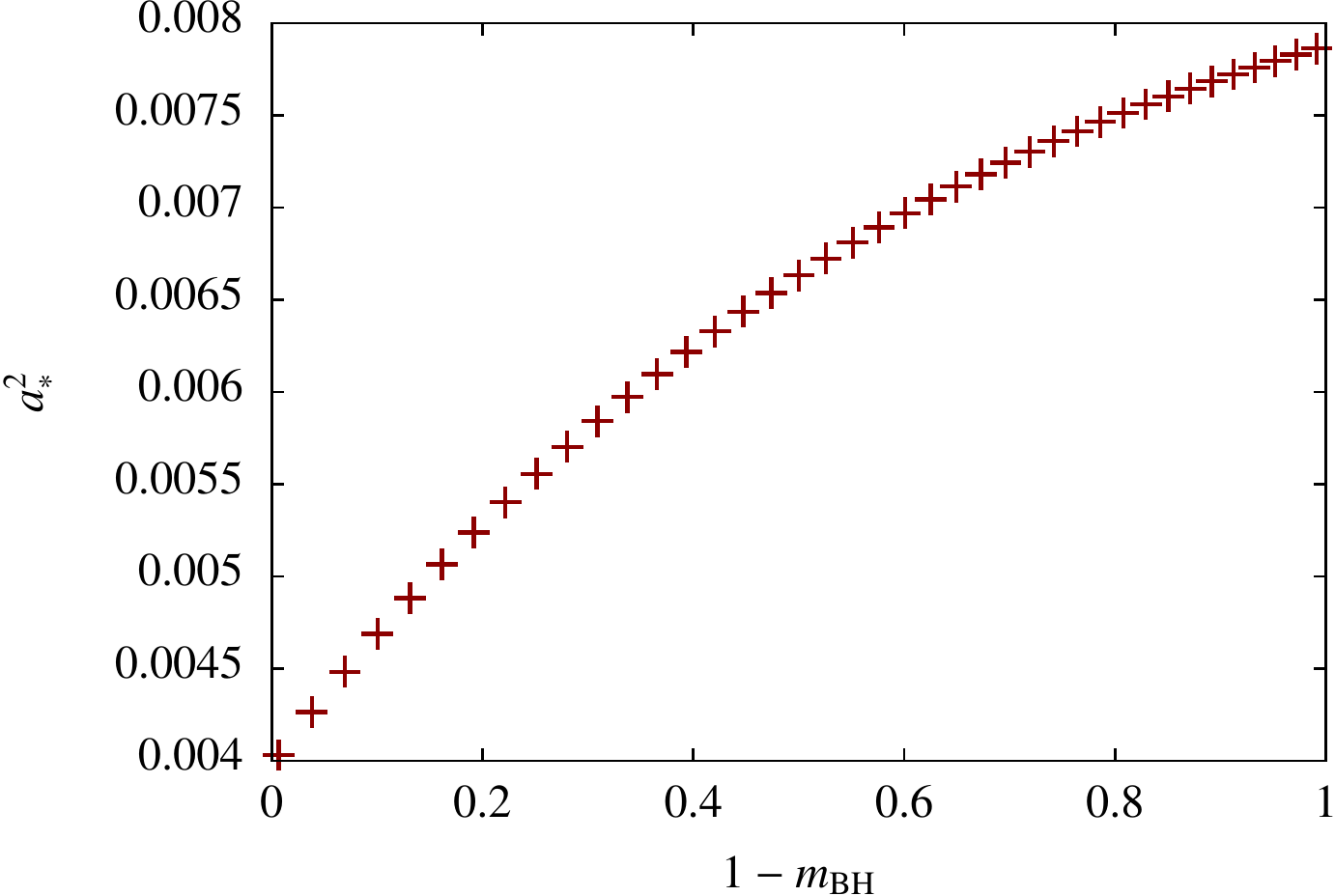}
\caption{$a^2_\ast$ (ordinate) versus the fluid mass $1 - m_\mathrm{BH}$ (abscissa). The asymptotic parameters are: $m = 1$, $a^2_\infty = 10^{-4}$, $\Gamma=5/3$, and $R_\infty = 10^4$.}
\label{fig:9}
\end{figure}

The increase of the size of the system to $R_\infty = 10^6$, with other parameters the same as in the previous diagram (Fig.\ \ref{fig:9}), yields a different picture, with essentially constant $a^2_\ast \approx 0.00668$ (see Fig.\ \ref{fig:10}). The exact analytic value, found from formula (\ref{slns0}), is $a^2_\ast=0.00668882$.

\begin{figure}
\centering
\includegraphics[width=1\columnwidth]{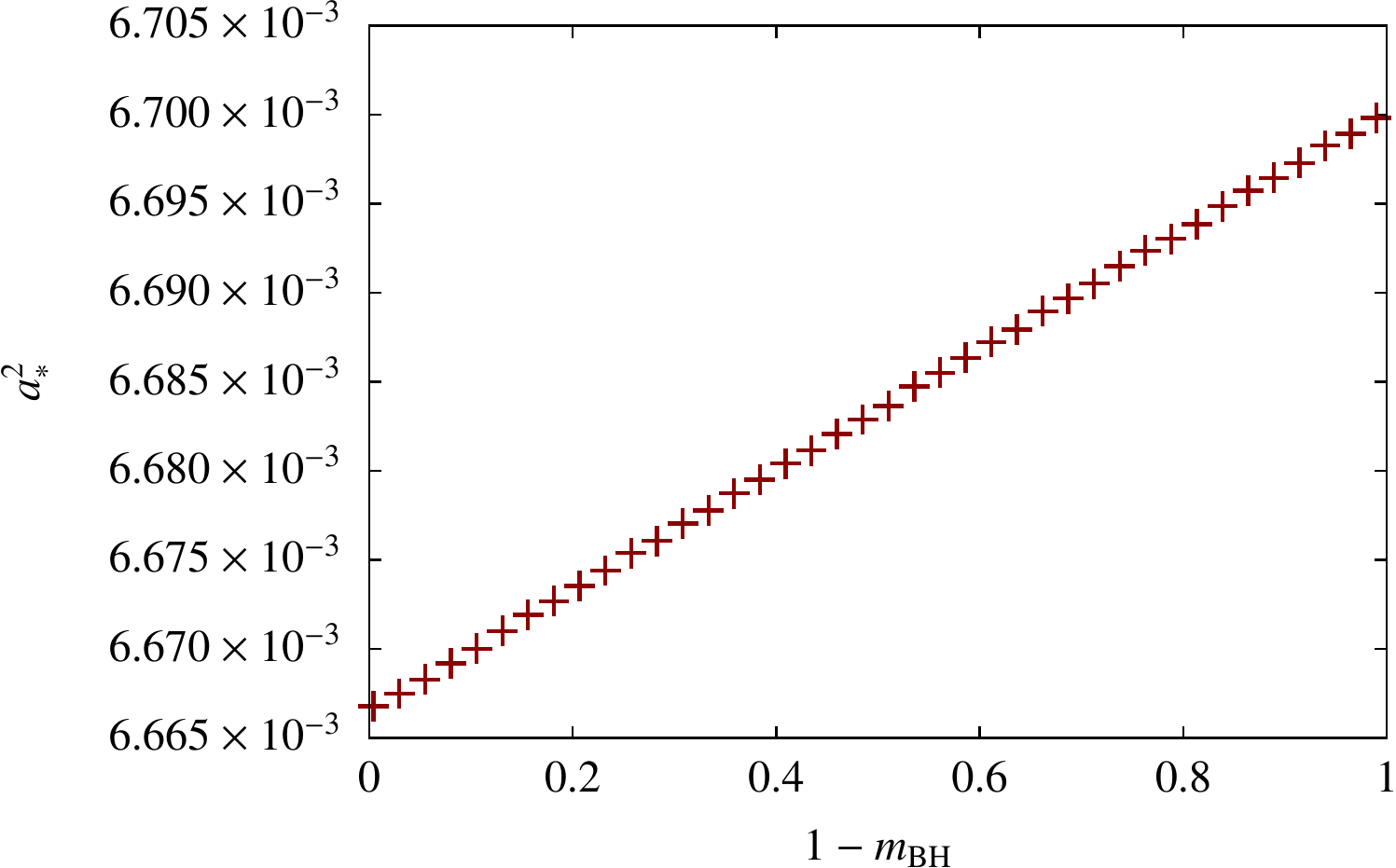}
\caption{$a^2_\ast$ (ordinate) versus the fluid mass $1 - m_\mathrm{BH}$ (abscissa). The asymptotic parameters are: $m=1$, $a^2_\infty = 10^{-4}$, $\Gamma =5/3$, and $R_\infty = 10^6$.}
\label{fig:10}
\end{figure}
 
Asymptotic data in the example depictured in Fig.\ \ref{fig:11} are the same as in Fig.\ \ref{fig:10}, with the exception of  $a^2_\infty = 10^{-6}$. Now $I_2 m \approx 3.8 \times 10^9 m$ is many orders larger than the size $R_\infty = 10^6$, which in turn is a few times larger than $I_3 m \approx 260000 m$. There is pronounced variability of $a^2_\ast$ as a function of $1 - m_\mathrm{BH}$.

\begin{figure}
\centering
\includegraphics[width=1\columnwidth]{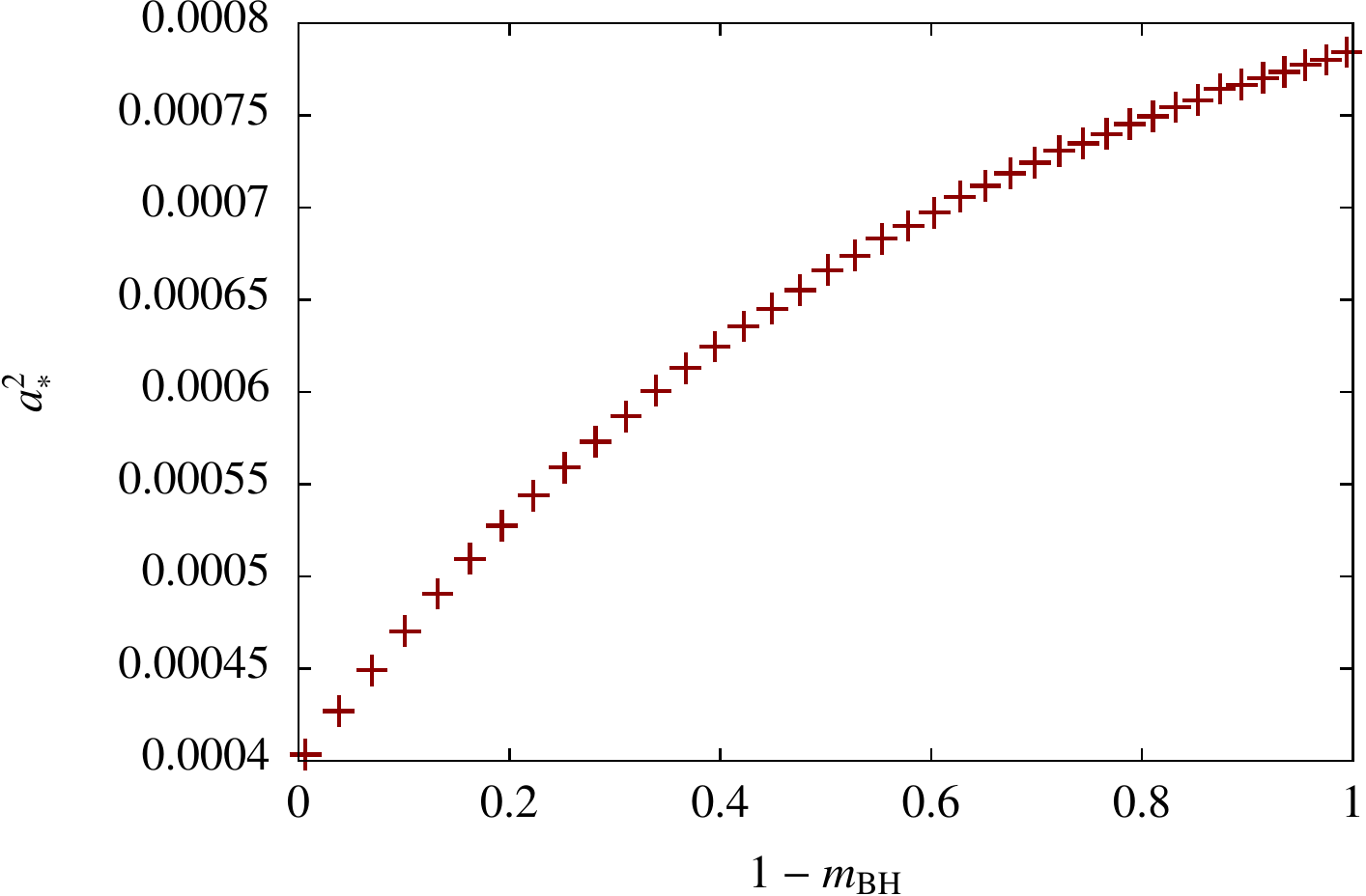}
\caption{$a^2_\ast$ (ordinate) versus the fluid mass $1 - m_\mathrm{BH}$ (abscissa). The asymptotic parameters are: $m = 1$, $a^2_\infty = 10^{-6}$, $\Gamma = 5/3$, and $R_\infty = 10^6$.}
\label{fig:11}
\end{figure}
 
\subsection{Solutions with $\Gamma =2$}

The class of accretion solutions, for polytropic gases with a polytropic index $\Gamma >5/3$, is markedly different from those considered earlier, with $\Gamma \le 5/3$. In contrast to the former case, there might appear homoclinic solutions \cite{CMS}, for values of asymptotic speeds $a^2_\infty$ that might be only somewhat larger than $m/R_\infty $. An exemplary solution is shown in Fig.\ \ref{fig:11b}. We need to exclude them, which complicates numerical analysis. In all examples of this subsection the asymptotic mass is $m = 1$, $R_\infty =100$, and the polytropic index $\Gamma = 2$.

\begin{figure}
\centering
\includegraphics[width=1\columnwidth]{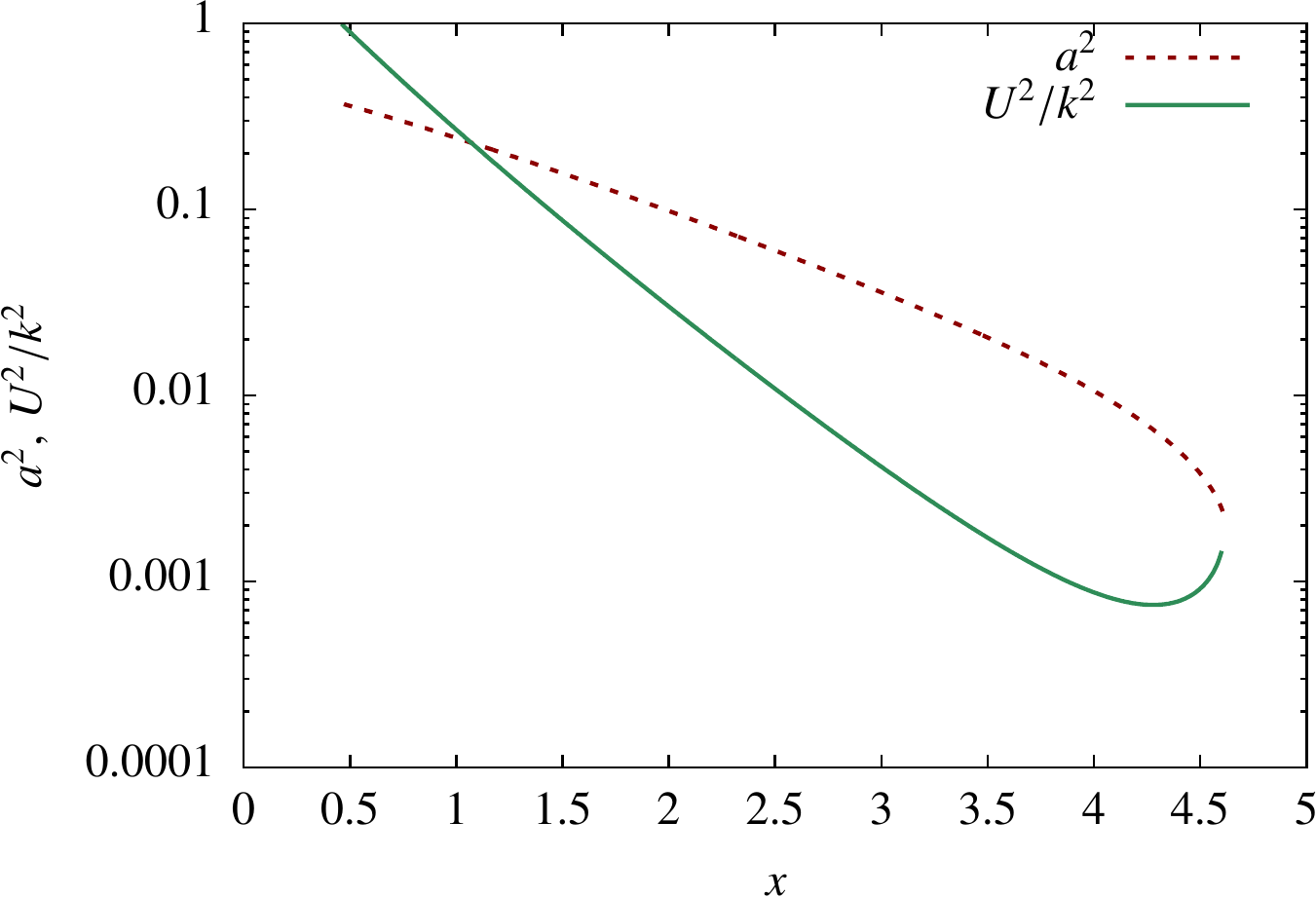}
\caption{An example of a homoclinic-type solution. The graph shows the speed of sound and the velocity in function of $x=\ln R$. The asymptotic parameters are: $R_\infty = 100$, $a^2_\infty = 2.38 \times 10^{-3}$, $m = 1$, $\Gamma = 2$, and $n_\infty = 1.6 \times 10^{-8}$. For this solution $m_\mathrm{BH} = 0.79 \, m$.}
\label{fig:11b}
\end{figure}

In the first two diagrams (Figs.\ \ref{fig:12} and \ref{fig:13}) we present solutions corresponding to $a^2_\infty = 10^{-2}$. Figure \ref{fig:12} shows that the speed of sound of the sonic point depends on $1-m_\mathrm{BH}/m$, albeit the dependence is moderate---the relative change is smaller than 20\%. Figure \ref{fig:13} demonstrates that the mass accretion rate is of the character found in \cite{PRD2006}; $\dot m \propto m^2_\ast (m-m_\ast)$. This is surprising, because now $a^2_\ast $ depends on $1-m_\mathrm{BH}/m$, in contrast to cases discussed in \cite{PRD2006}.

\begin{figure}
\centering
\includegraphics[width=1\columnwidth]{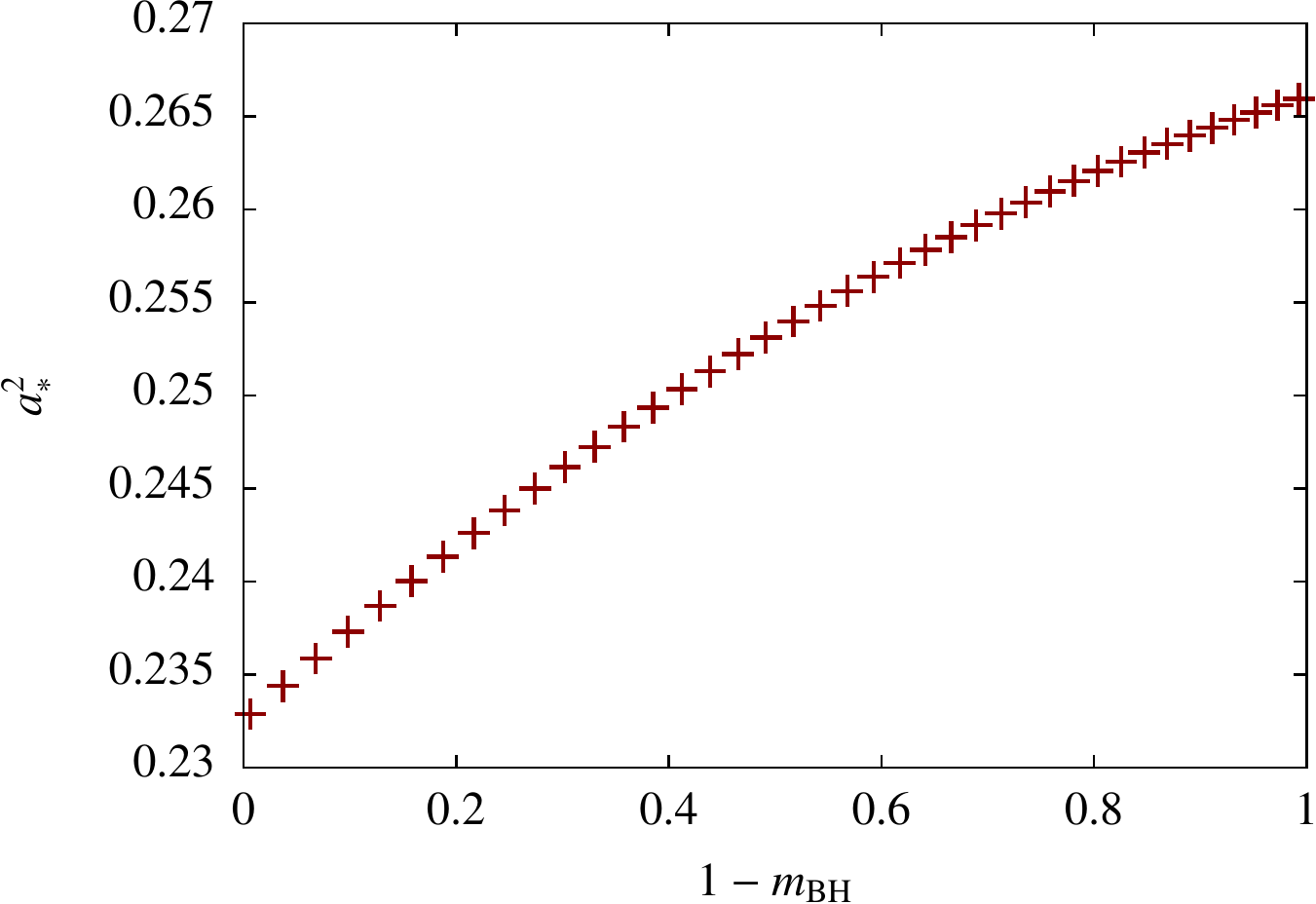}
\caption{$a^2_\ast$ (ordinate) versus the fluid mass $1 - m_\mathrm{BH}$ (abscissa). The asymptotic parameters are: $m = 1$, $a^2_\infty = 10^{-2}$, $\Gamma = 2$, and $R_\infty = 100$.}
\label{fig:12}     
\end{figure}

\begin{figure}
\centering
\includegraphics[width=1\columnwidth]{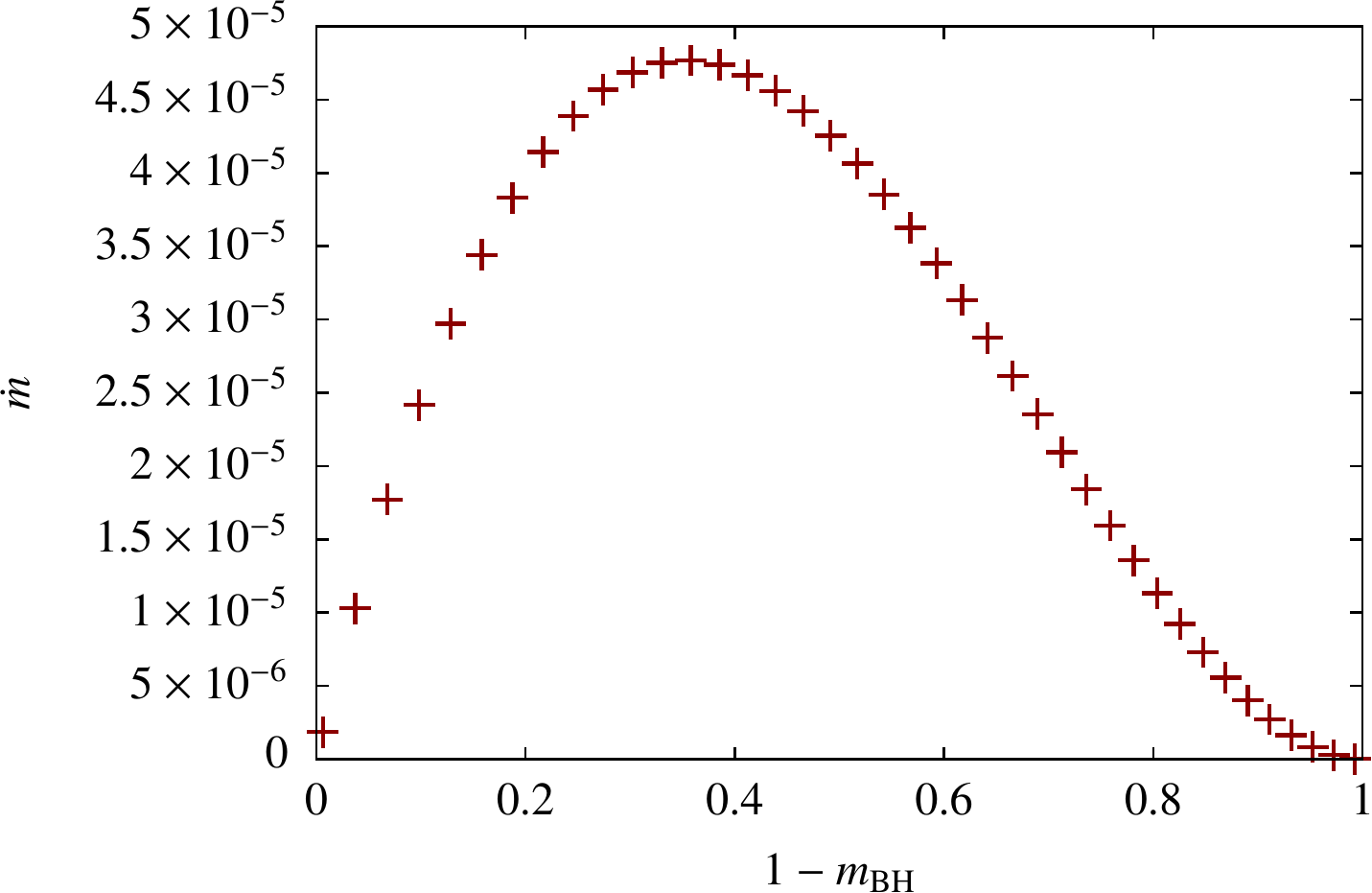}
\caption{The mass accretion rate $\dot m$ (ordinate) versus the fluid mass $1 - m_\mathrm{BH}$ (abscissa). The asymptotic parameters are: $m = 1$, $a^2_\infty = 10^{-2}$, $\Gamma = 2$, and $R_\infty = 100$.}
\label{fig:13}   
\end{figure}

The last two diagrams (Figs.\ \ref{fig:14} and \ref{fig:15}) show solutions corresponding to $a^2_\infty = 10^{-1}$. Figure \ref{fig:14} reveals a small dependence of $a^2_\ast$  on $1-m_\mathrm{BH}/m$---the relative change is smaller than 5\%. The speed of sound $a^2_\ast\approx 0.385$ is close to the analytic value predicted by Eq.\ (\ref{slns0}), according to which $a^2_\ast =0.388336$. Figure \ref{fig:15} again demonstrates that the mass accretion rate is of the character found in \cite{PRD2006}; $\dot m \propto m^2_\ast (m - m_\ast)$.   

\begin{figure}
\centering
\includegraphics[width=1\columnwidth]{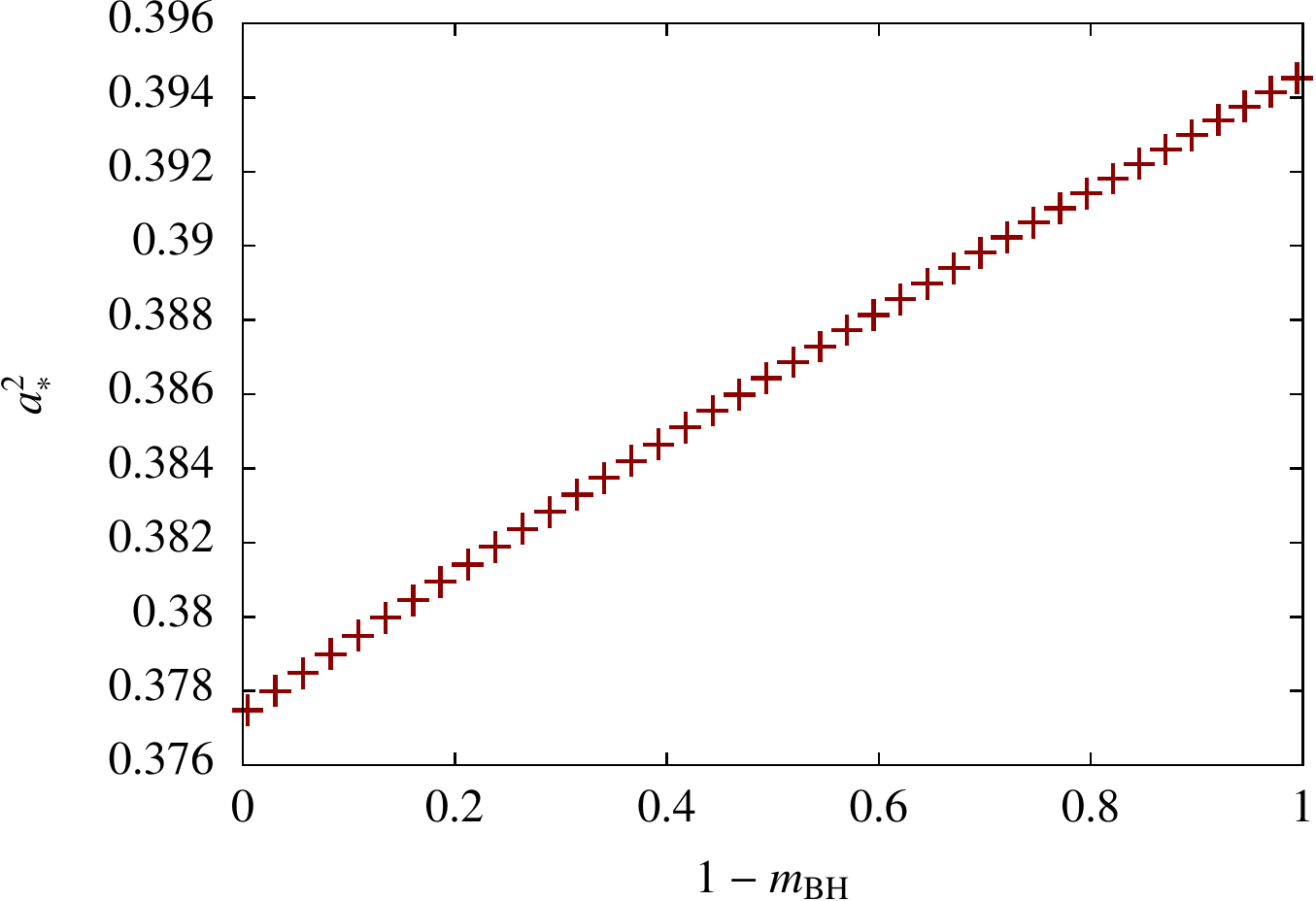}
\caption{$a^2_\ast$ (ordinate) versus the fluid mass $1 - m_\mathrm{BH}$ (abscissa). The asymptotic parameters are: $m = 1$, $a^2_\infty = 10^{-1}$, $\Gamma = 2$, and $R_\infty = 100$.}
\label{fig:14}   
\end{figure}

\begin{figure}
    \centering
\includegraphics[width=1\columnwidth]{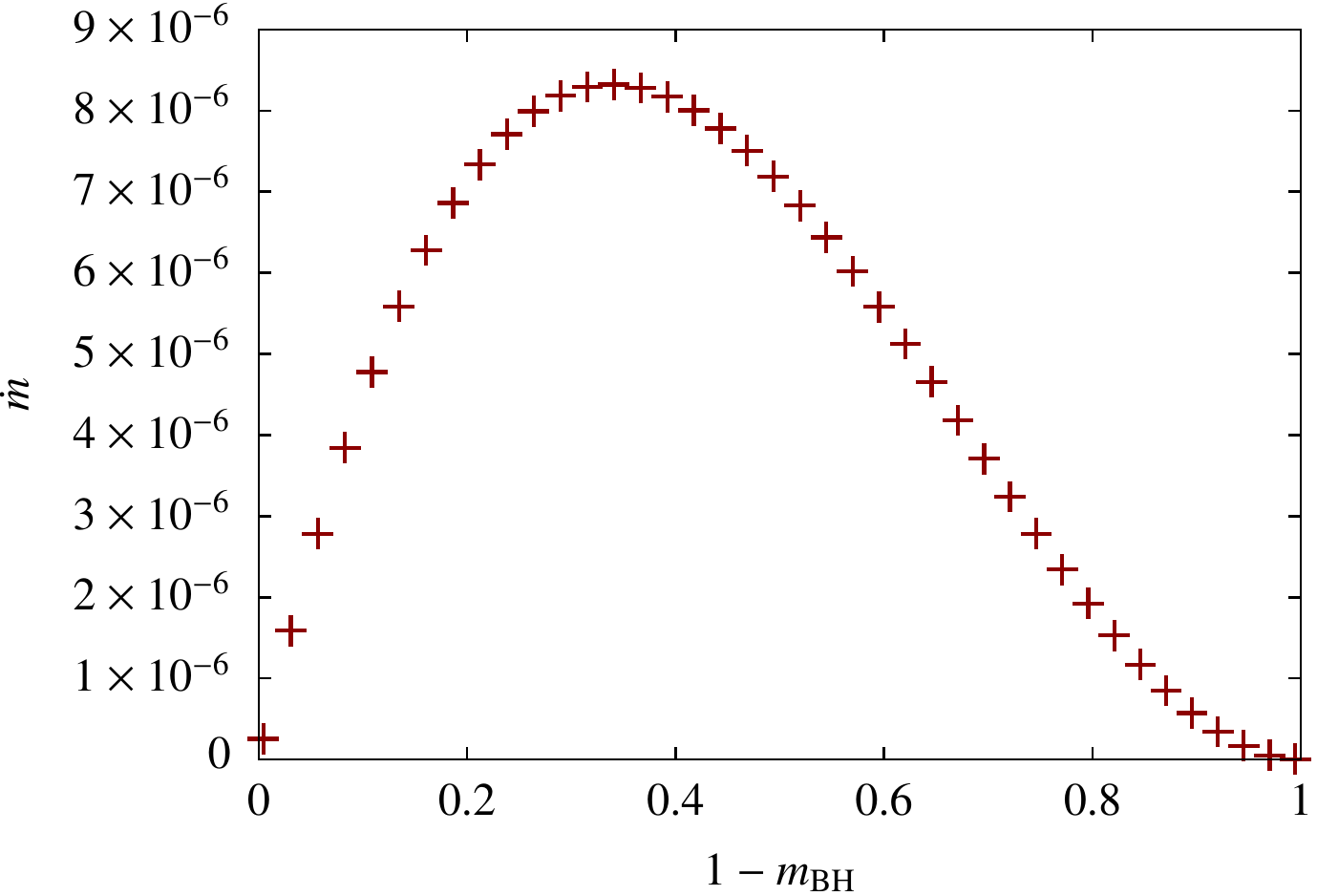}
\caption{The mass accretion rate $\dot m$ (ordinate) versus the fluid mass $1 - m_\mathrm{BH}$ (abscissa). The asymptotic parameters are: $m = 1$, $a^2_\infty = 10^{-1}$, $\Gamma = 2$, and $R_\infty = 100$.}
\label{fig:15}   
\end{figure}

\section{Astrophysical application: lifetime of black holes within polytropic stars}

Richards, Baumgarte, and Shapiro have an interesting idea of estimating the lifetime of hypothetical primordial black holes within neutron stars \cite{Baumgarte_Shapiro, BS}. They assume that matter steadily falls onto a black hole, that the accretion is critical, and that intensive characteristics of sonic points---the speed of sound $a_\ast$, the infall velocity $|U_\ast|/k_\ast $, and the ratio $m_\ast/R_\ast$---do not change during the whole process. The mass accretion rate can be expressed by formula (\ref{dot1m}), i.e.,
\begin{equation}
\dot m = -(\partial_t - (\partial_t R)\partial_R) m_\mathrm{ext}(R) = - 4 \pi N_\ast R^2_\ast U_\ast  (\varrho_\ast  + p_\ast ).
\label{dot1mf}
\end{equation}
The lapse $N_\ast$ can be obtained from Eq.\ (\ref{bernoulli}) (for a $p$--$\varrho$ equation of state) or from Eq.\ (\ref{nnB}) (for $p$--$n$ fluids). We know from our former discussion, that there exist mathematical models that possess required properties. They are characterized by 
assumptions of our Theorems 3, 4, and 7; their fulfillment guarantees that characteristics of sonic points are the same, as for a related process with test fluids. In addition, one should take into account the fact, that the asymptotic mass of the system is constant during accretion. This leads to the   dependence  $\dot m \propto m^2_\ast (m - m_\ast)$, 
which is clearly seen in Figs.\ \ref{fig:13}, \ref{fig:15}.

\subsection{Lifetime of black holes---the idealized $p$--$\varrho$ case}

We shall describe how one can estimate the life of a black hole in an idealized polytropic star, first discussing the $p$--$\varrho$ case. In the test fluid approximation the asymptotic density $\varrho_\infty$ is constant during the accretion process. In the present description, that includes backreaction, it varies. Numerical calculations suggest that $m_\ast (t)\approx m -\gamma \varrho_\infty(t)$, where $\gamma $ is a constant  \cite{PRD2006}. The whole dependence on $\varrho_\infty$ is contained in the factor $m_\ast^2 \varrho_\infty = m^2_\ast(m - m_\ast)/\gamma $ (see a discussion in \cite{PRD2006}, for $\Gamma <5/3$). In such a case we have
\begin{equation}
\dot m = \pi m_\ast^2  \frac{m-m_\ast} {\gamma }   \left( \frac{a_\ast^2}{a_\infty^2} 
\right)^\frac{(5 - 3 \Gamma)}{2(\Gamma - 1)} \left(1 + \frac{a^2_\ast}{\Gamma } \right) \frac{1 + 3 a_\ast^2}{a^3_\infty}.
\label{dotmm}
\end{equation}
From that we get the  time $T(m_\mathrm{in},M)$ (the proper time of a stationary asymptotic observer)
\begin{eqnarray}
\label{dotT}
\lefteqn{T(m_\mathrm{in},M) = } \nonumber \\
&& \int_{m_{\mathrm{in}}}^M \frac{dm_\ast}{ \pi m_\ast^2  \frac{m-m_\ast} {\gamma }   \left( \frac{a_\ast^2}{a_\infty^2} 
\right)^\frac{(5 - 3 \Gamma)}{2(\Gamma - 1)} \left(1 + \frac{a^2_\ast}{\Gamma } \right) \frac{1 + 3 a_\ast^2}{a^3_\infty}}; \nonumber \\
\end{eqnarray}
the duration of time needed for the increase of mass within a sphere of the areal radius $R_\ast$ from initial $m_\ast$ to final $M-m_\mathrm{in}+m_\ast$. The mass of the central black hole would increase from initial $m_\mathrm{in}$ to $M $ during the time $T(m_\mathrm{in},M)$. The lifetime $T_\mathrm{L}$  of a black hole can be defined as the time $T$ that is needed for the transfer of the whole fluid  mass  $m-m_\mathrm{in} $ into the black hole.
The integral in Eq.\ (\ref{dotT}) can be explicitly evaluated, in our idealized case, with the result
\begin{equation}
T(m_\mathrm{in},M) =  \gamma \frac{\ln \left[ \frac{M\left( m-m_\mathrm{in}\right)}{m_\mathrm{in}\left(m-M\right)} \right]+\frac{m}{m_\mathrm{in}}-\frac{m}{M}}{ \pi m^2 \left( \frac{a_\ast^2}{a_\infty^2} 
\right)^\frac{(5 - 3 \Gamma)}{2(\Gamma - 1)} \left(1 + \frac{a^2_\ast}{\Gamma } \right) \frac{1 + 3 a_\ast^2}{a^3_\infty}}.
\label{dotT2}
\end{equation}
It is clear, that in the limit $M\rightarrow m$ we get infinite time $T(m_\mathrm{in},m)$; the black hole persists for an infinite duration of time within our idealized polytropic star. On the other hand, if we fix a reasonable cutoff---for instance that the mass ratio $M $ of the final black hole  is equal to  a concrete fraction  of the total mass $m$ (for instance $0.999 m$)---then we may get a finite result, with the dominant contribution coming from the term proportional to $m/m_\mathrm{in}$.

\subsection{Lifetime of black holes---the idealized $p$--$n$ case}

We shall first calculate the mass accretion rate given by Eq.\ (\ref{4.18}) for the $p$--$n$ equation of state and with no backreaction, that is assuming $c_\ast =0$. As in the former $p$--$\varrho$ case we write
\begin{equation}
4R_\ast^2 = m^2_\ast \left(\frac{2R_\ast}{m_\ast}\right)^2=m^2_\ast\left(\frac{1+3a_\ast^2}{a^2_\ast}\right)^2
\end{equation}
and
\begin{equation}
U_\ast^2 =\frac{m_\ast}{2R_\ast}=\frac{a^2_\ast}{1+3a^2_\ast}.
\end{equation}
The rest mass density $\varrho$ has to be expressed in terms of the baryonic mass density, according to Eq.\ (\ref{nb3a}). A direct calculation yields following result
\begin{eqnarray}
\dot m & = & \pi m^2_\ast n_\infty  \left(\frac{a_\ast}{a_\infty} \right)^{\frac{5-3\Gamma}{\Gamma-1}}\left(\frac{\Gamma-1-a^2_\infty}{\Gamma-1-a^2_\ast} \right)^{\frac{1}{\Gamma-1}} \nonumber \\
& & \times \frac{(\Gamma - 1)(1+3 a^2_\ast)}{a_\infty^3 (\Gamma - 1 - a^2_\ast)}.
\label{mdotnnowe}
\end{eqnarray}

One can estimate the lifetime of a black hole within an idealized polytropic $p$--$n$ star, similarly as in the $p$--$\varrho$ case. In the test fluid approximation the asymptotic density $\varrho_\infty$ is constant during the accretion process. In the present description, that includes backreaction, it varies. Numerical calculations suggest that $m_\ast (t)\approx m -\gamma n_\infty(t)$, where $\gamma $ is a constant. The whole dependence on $n_\infty$ is contained in the factor $m_\ast^2 n_\infty = m^2_\ast(m - m_\ast)/\gamma $, for $1<\Gamma \le 2$). Thus we have, similarly as before,
\begin{eqnarray}
\dot m & = & \pi m^2_\ast \frac{m - m_\ast}{\gamma}   \left(\frac{a_\ast}{a_\infty} \right)^{\frac{5-3\Gamma}{\Gamma-1}}\left(\frac{\Gamma-1-a^2_\infty}{\Gamma-1-a^2_\ast} \right)^{\frac{1}{\Gamma-1}} \nonumber \\
& & \times \frac{(\Gamma - 1)(1+3 a^2_\ast)}{a_\infty^3 (\Gamma - 1 - a^2_\ast)}.
\label{mdotnnowe2}
\end{eqnarray}
From that we get the  time $T(m_\mathrm{in},M)$ (the proper time of a stationary asymptotic observer)
\begin{eqnarray}
\label{dotTn}
\lefteqn{T(m_\mathrm{in},M) =} \nonumber \\
&& \frac{\gamma}{\pi} \left( \frac{a_\infty^2}{a_\ast^2} 
\right)^\frac{(5 - 3 \Gamma)}{2(\Gamma - 1)}  \left(\frac{\Gamma-1-a^2_\ast}{\Gamma-1-a^2_\infty} \right)^{\frac{1}{\Gamma-1}} \nonumber \\
&&\times \frac{a^3_\infty (\Gamma - 1 - a^2_\ast)}{(\Gamma - 1) (1 + 3 a^2_\ast)} \int_{m_{\mathrm{in}}}^M \frac{dm_\ast}{m_\ast^2  \left(m-m_\ast\right)},
\end{eqnarray}
the interval of time needed for the increase of mass within a sphere of the areal radius $R_\ast$ from initial $m_\ast$ to final $M-m_\mathrm{in}+m_\ast$. The mass of the central black hole would increase from initial $m_\mathrm{in}$ to $M $ during the time $T(m_\mathrm{in},M)$. The lifetime $T_\mathrm{L}$  of a black hole can be defined as the time $T$ that is needed for the transfer of the whole initial fluid  mass  $m-m_\mathrm{in} $ into the black hole.
The integral (\ref{dotTn}) can be explicitly evaluated, in our idealized case, with the result
\begin{eqnarray}
\label{dotT2n}
\lefteqn{T(m_\mathrm{in},M) =} \nonumber \\
&&  \frac{\gamma}{\pi} \left( \frac{a_\infty^2}{a_\ast^2} 
\right)^\frac{(5 - 3 \Gamma)}{2(\Gamma - 1)}  \left(\frac{\Gamma-1-a^2_\ast}{\Gamma-1-a^2_\infty} \right)^{\frac{1}{\Gamma-1}} \frac{a^3_\infty (\Gamma - 1 - a^2_\ast)}{(\Gamma - 1) (1 + 3 a^2_\ast)}  \nonumber \\
&& \times \frac{1}{m^2} \left\{ \ln \left[\frac{M\left( m-m_\mathrm{in}\right)}{m_\mathrm{in}\left(m-M\right)}\right] + \frac{m}{m_\mathrm{in}}-\frac{m}{M} \right\}.
\end{eqnarray}

We have similar conclusions as in the case formerly discussed for $p$--$\varrho$ polytropic stars. In the limit $M\rightarrow m$ we get   $T(m_\mathrm{in},m)\rightarrow \infty $; the black hole exists for an infinite time within our idealized polytropic $p$--$n$  star. If one defines  a reasonable cutoff---for instance that the mass ratio $M/m$ is smaller than 0.999---then the lifetime is finite, with the dominant given by the term with $m/m_\mathrm{in}$.

\subsection{Lifetime of primordial black holes within neutron stars}

In their work Richards, Baumgarte, and Shapiro \cite{Baumgarte_Shapiro,BS} assume  that backreaction can be neglected  in primordial black holes within neutron stars.  Unfortunately, this is not true, as we shall explain. 
The size of neutron stars is larger than $9m/4$ (this is the Buchdahl limit \cite{Buchdahl}) and it probably does not exceed $50m$, where $m$ is the mass. Neutron stars are compact objects, so compact that they do not satisfy those assumptions of our theoretical results of Sections V and VI that would guarantee the absence of backreation.  Thus it is not really suprising  that there are numerical examples of Section VII that show, for steadily accreting neutron star configurations with black holes in their centers, that   sonic  parameters $a_\ast$, $U_\ast$, and $m_\ast/(2R_\ast)$ are not constant. The backreaction cannot be ignored under these circumstances. The  analysis sketched in subsections A and B of the present Section VIII cannot be applied in order to estimate the lifetime of black holes within neutron stars. Thus, literally speaking, results of Baumgarte and Shapiro on the lifetime of black holes within neutron stars, cannot be regarded as being proven.

On the other hand, the plots of the mass accretion rate $\dot m$ shown in Figs.\ \ref{fig:13} and \ref{fig:15}---where backreaction manifests quite strongly---are quite similar to the plot of $\dot m$ in Fig.\ \ref{fig:17}, in which backreaction is clearly negligible. That might mean that backreaction does not change dramatically the time $T(m_\mathrm{in},M)$. This issue warrants investigation, in our opinion.

\begin{figure}
\centering
\includegraphics[width=1\columnwidth]{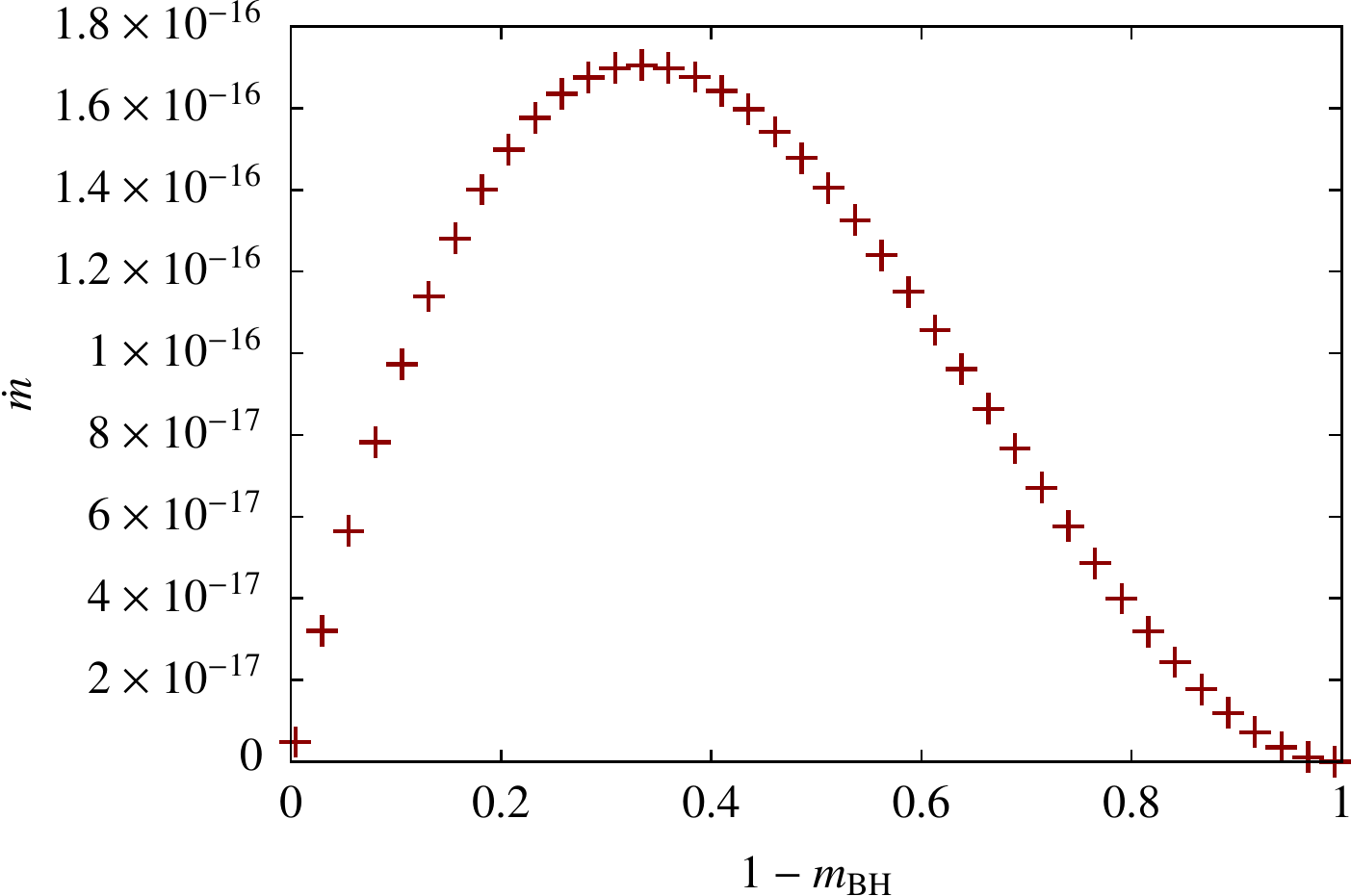}
\caption{The mass accretion rate $\dot m$ (ordinate) versus the fluid mass $1 - m_\mathrm{BH}$ (abscissa). The asymptotic parameters are: $m = 1$, $a^2_\infty = 10^{-2}$, $\Gamma = 1.5$, and $R_\infty = 10^6$.}
\label{fig:17}   
\end{figure} 

\subsection{Other applications}

A review of our key results in Sections V and VI demonstrates, that   it is legitimate to use them in the case of a white dwarf, in which matter is accreting onto a wandering black hole, that found itself in its center. The astrophysical systems of interest can also consist of stars harbouring black holes---a sub-type of Thorne--\.Zytkow stars \cite{Zytkow1, Zytkow2}. In these objects one can perform  the analysis outlined in subsection VIII A and VIII B.

\section{Summary}  

It has been noted almost two decades ago that, under suitable conditions, backreaction does not influence intensive  characteristics of sonic points. This paper deals with wider classes of polytropic equations of state, with polytropic indices in the range $(1,2]$. We derive boundary conditions that allow one to prove analytically that intensive characteristics of critical sonic points are in fact the same as for test fluids. We provide  numerical examples that confirm the validity of analytic proofs. More importantly, we find numerical solutions of accreting systems, in which all characteristics of sonic points of critical flows depend on selfgravity.  

This fact has a consequence---the recently announced analysis of lifetimes of primordial black holes within neutron stars \cite{Baumgarte_Shapiro, BS} assumes the existence of intensive parameters  of the related accretion. We have found numerical counterexamples to this assumption.

\begin{acknowledgments}
We would like to acknowledge and thank Janusz Karkowski for his help in designing the numerical method used in this paper. P.\ M.\ was partially supported by the Polish National Science Centre Grant No.\ 2017/26/A/ST2/00530.
\end{acknowledgments}

\end{document}